\date{}
\definecolor{curve1}{RGB}{32, 125, 191}
\definecolor{curve2}{RGB}{141, 80, 155}
\definecolor{greenCurve}{RGB}{0, 150, 0}
\newtheorem{remark}{Remark}
\newtheorem{lemma}{Lemma}
\definecolor{curve1}{RGB}{32, 125, 191}
\definecolor{curve2}{RGB}{141, 80, 155}
\newcommand{\Energy}{\mathcal{E}}
\newcommand{\rhot}{\rho}
\definecolor{amber}{rgb}{1.0, 0.49, 0.0}
\begin{document}
\begin{frontmatter}
\title{Ginzburg-Landau model of a Stiffnessometer - a superconducting stiffness meter device}
\author[Math]{Nir Gavish\corref{correspondingauthor}}
\ead{ngavish@technion.ac.il}
\ead[url]{http://ngavish.net.technion.ac.il}
\cortext[correspondingauthor]{Corresponding author}
\author[Physics]{Oded Kenneth}
\ead{kenneth@ph.technion.ac.il}
\ead[url]{https://www.tau.ac.il/~quantum/Kenneth/Kenneth.html}
\author[Physics]{Amit Keren}
\ead{keren@physics.technion.ac.il}
 \ead[url]{http://phsites.technion.ac.il/keren/}
 \address[Math]{Faculty of Mathematics, Technion - Israel Institute of Technology, Haifa, 32000, Israel \fnref{label3}}
\address[Physics]{Faculty of Physics, Technion - Israel Institute of Technology, Haifa, 32000, Israel }

\begin{abstract}
We study the Ginzburg-Landau equations of super-conductivity describing the experimental setup of a Stiffnessometer device.  In particular, we consider the nonlinear regime which reveals the impact of the superconductive critical current on the Stiffnessometer signal.  As expected, we find that at high flux regimes, superconductivity is destroyed in parts of the superconductive regime.  Surprisingly, however, we find that the superconductivity does not gradually decay to zero as flux increases, but rather the branch of solutions undergoes branch folding.   We use asymptotic analysis to characterize the solutions at the numerous parameter regimes in which they exist.  An immediate application of the work is an extension of the regime in which experimental measurements of the Stiffnessometer device can be interpreted.  
\end{abstract}
\begin{keyword}
Superconductance, Stiffnessometer, Ginzburg-Landau, Asymptotic analysis
\end{keyword}
\end{frontmatter}

\section{Introduction}
Superconductors (SC) are conducting materials that at temperatures lower than some critical value $T_c$ develop two properties:  (I) They loose their electrical resistance, and current can flow in them forever without any voltage, providing that the current density is smaller than a critical value $j_c$. For example, a low enough current in a SC ring will never decay. (II) Small enough magnetic field penetrate a SC only to finite penetration depth $\lambda$. This is known as the Meissner effect. 

High-temperature superconductors are particularly exciting since they operate at temperatures reachable by either liquid nitrogen or electrically powered refrigerators. Their entrance into the consumer market was delayed only by the need to manufacture reliable and flexible wires. Recent development of multi-layered high temperature superconducting based tapes have lead to the manufacturing of several application. Small size orthopedic MRI instruments~\cite{parkinson2017compact}, and bucket size portable 10~T magnets for production lines and laboratories are now available. The next generation of maglav trains will operate with SC~\cite{song2020design}. There are large scale experiment to deliver power and to produce fault current limiters on a city scale based on SC~\cite{yadav2014review}. As consumers confidence in the durability of the wires will grow, so will their application. Consequently, there is a global effort to find and characterize new and better SC. The three important parameters to improve are: $T_c$, $\lambda$, and $j_c$. Multiple methods exist to measure these parameters. They are based on the application of a magnetic field and, more importantly, running current through the SC sample using leads connected to an external source. Having such leads defies the assumptions of thermodynamics and complicates the analysis.  Moreover, measuring $\lambda$ and $j_c$ require different experimental set ups.  

Recently, a new device aka the Stiffnessometer was developed to measure all three parameters of a SC at once, without an external current source connected to the SC~\cite{kapon2017stiffnessometer,kapon2019phase}.  The setup of the measurement consists of a very long coil piercing a SC sample with cylindrical symmetry, see Fig.~\ref{Setup}.  Due to the current in the coil, the SC generates it's own magnetic induction, and vector potential.  The SC parameters can be extracted from proper measurements of the fluxes from both coil and SC cylinder over a range of currents driven through the coil.  Since no external current is applied to the SC, and the SC does not experience external magnetic fields, the measurement is done in thermodynamic conditions, without the interference of vortices anywhere but in the center of the hollow cylinder, and without the complications of sample edges or sample shape.  For more details, we refer the reader to~\cite{kapon2017stiffnessometer,kapon2019phase}.  For completeness, a brief review of the Stiffnessometer measurement device setup and the underlying theory is provided Section~\ref{sec:Stiffnessometer}.  

The Stiffnessometer was successfully applied to study of high temperature superconductors [1,2]. It was demonstrated that the  Stiffnessometer can measure SC properties at relatively high temperature that are closer to the critical temperature~$T_c$ than any other experimental technique. This new detection window, especially close to $T_c$, gives rise to surprising results concerning the behavior of anisotropic SC, like the high $T_c$ materials, close to a point of phase transition. However, the current theory used to interpret the Stiffnessometer measurements is limited to a small flux in the coil corresponding to SC far from the critical current. 

In this work, we study the Ginzburg-Landau equations of super-conductivity in the experimental conditions of the Stiffnessometer.  In particular, we consider the nonlinear regime which reveals the impact of the SC critical current on the Stiffnessometer signal.  We determine quantitatively the relation between the measured vector potential of the hollow SC cylinder to the applied flux in the coil, for different SC parameters.

\begin{figure}[tbph]
\center
	\includegraphics[trim=0cm 0cm 0cm 0cm, clip=true,width=0.36\columnwidth]{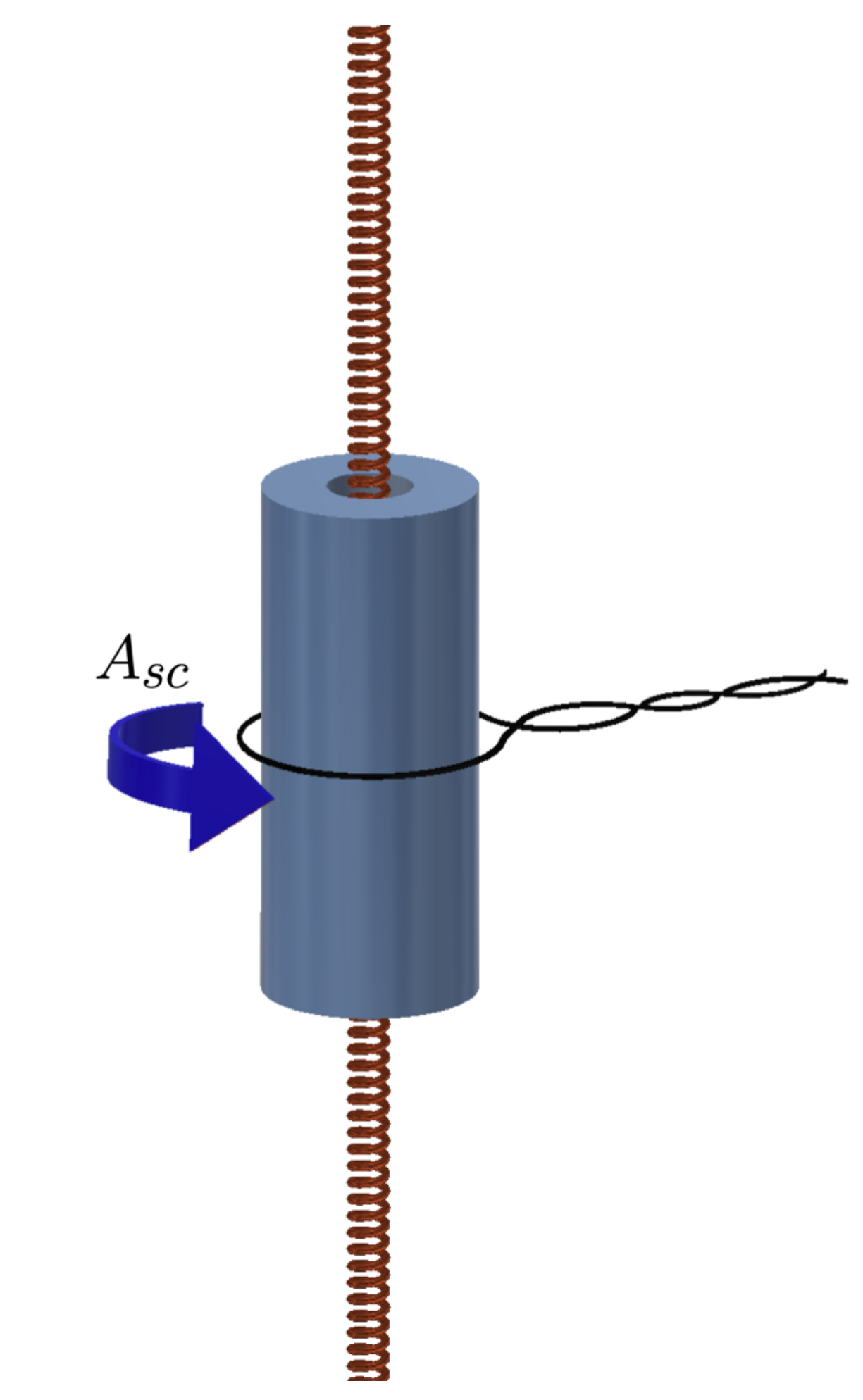}
	\caption{Illustration of the Stiffnessometer principle of operation. An ideally infinitely long coil pierces a hollow superconducting cylinder. The cylinder's height is much larger than it's radial dimensions. Applied current is running through the coil generating flux in its center. The magnetic field outside of the coil is zero. A pickup loop surrounds both coil and cylinder and measures the total flux $\Phi$ from both. By subtracting the coil flux measured above the superconductor $T_c$, from the total flux measure below $T_c$, the flux generated by the superconductor $\Phi_{sc}$ can be determined. $\Phi_{sc}$ is proportional to the vector potential~$A_{\rm sc}$ of the cylinder at the pickup loop location (see text). }
	\label{Setup}
\end{figure}
\subsection{Paper outline and summary of results}
The paper is organized as follows: In Section~\ref{sec:Stiffnessometer} we describe the experimental setup and derive the essential equations required to understand the phenomena of superconductivity. In Section~\ref{sec:model} we present the derivation of the Ginzburg-Landau system that describes the Stiffnessometer setup considered in this work.  Particularly, we show that the Ginzburg-Landau system can be expressed in terms of numerous relevant physical quantities in various domains and involves several parameters on different scales.  A key challenge addressed in this section is the choice of the quantities of study.  In Section~\ref{sec:analysis} we consider analytic properties of the solutions of the corresponding Ginzburg-Landau system, e.g., we prove they are monotone.  Section~\ref{sec:superconductive} presents a study of the case for which the whole cylinder region is in a superconductive state.  We approximate the solutions of the Ginzburg-Landau system in this case.  Particularly, we show that the vector potential is well approximated by an explicit function expressed in terms of Bessel functions, and that the superconducting order parameter has a double boundary layer near the inner rim of the superconducting hollow cylinder.  These approximations are used to quantify the flux regime in which the whole cylinder region is in a superconductive state, aka the low flux regime.  In Section~\ref{sec:highCurrentRegime} we consider the behavior of the system at higher fluxes.  Particularly, in Section~\ref{sec:partialSC}
we consider the case of partial superconductivity in which, roughly speaking, not all the superconducting cylindrical region is in a full superconductive state.  We present an approximation of the solution of the nonlinear equation describing the vector potential in this case, and quantify the flux regime in which these solutions persist. In Section~\ref{sec:SC_partially_destroyed} we consider yet higher flux regimes in which superconductivity is destroyed in part of the cylindrical region. We show that in this case, the study of the order parameter equation gives rise to a nonlinear turning point problem, and that near the transition point the order parameter is well approximated by a scaled Hastings-McLeod solution. This result, however, does not reveal the location of the turning point.  Using variational methods, we approximate its location, and use this result to approximate the vector potential in this case. The equation for the vector potential is nonlinear in this case. Nevertheless, we show 
that the vector potential can be approximated by the solution of the low flux regime linear equation applied to a superconducting cylinder region which has an effective inner radius related to the location of the turning point. The study of the system in the high flux regime shows that superconductivity decreases with the normalized flux~$J$, but not up to a point in which superconductivity is completely destroyed. In Section~\ref{sec:weakSC}, we study the case of weak superconductivity in which superconductivity is nearly completely destroyed. In contrast to the previous sections, we do not make any assumptions on the magnitude of $J$. Rather, we aim to obtain from the analysis an approximation of $J$ at which superconductivity is destroyed. Surprisingly, we find that superconductivity is destroyed in a regime of fluxes which are significantly smaller than the fluxes in the `high flux' regime. Moreover, we find that, in this regime of fluxes, the superconducting state order parameter increases with the flux. This implies that for a range of flux values, the system has multiple non-trivial solutions. To understand whether these solutions belong to different solution branches, or belong to one solution branch that undergoes branch folding, we conduct in Section~\ref{sec:numericalContinuation} a numerical continuation study.   Particularly, we show that the branch of nontrivial solutions described in Sections~\ref{sec:superconductive} and~\ref{sec:highCurrentRegime} undergoes a branch folding in the high flux regime, where the section of the branch after the branch fold is described in Section~\ref{sec:weakSC}.  Numerical details are provided in Section~\ref{sec:numerics}. Concluding remarks are presented in Section~\ref{sec:ConcludingRemarks}.

\section{Brief review of the Stiffnessometer device}\label{sec:Stiffnessometer}
We now provide a brief review of the Stiffnessometer device, and the related theory, in aim of providing proper context to this work.  We refer the reader to~\cite{kapon2017stiffnessometer,kapon2019phase} for additional details.  In particular, we present a derivation of the underlying equations that relate the measured quantities to the relevant SC parameters using London's equation.  In the next section, we present a variational derivation based on the Ginzburg-Landau free energy which, in proper regimes, gives rise to the same equations.  The reasons for this redundancy are two-fold: I) Physical quantities parameters relevant to the Stiffnessometer device arise more naturally in the derivation based on London's equation.  II) It allows readers who would like to focus on other aspects of the problem to skip this section.  

The Stiffnessometer device setup is based on a very long coil, approximated here by an infinite coil, piercing a SC sample with cylindrical symmetry. Here we consider a very tall hollow cylinder as in Fig.~\ref{Setup}. Both coil and SC are first cooled to a temperature below $T_c$ and then a current is ramped in the coil generating time dependent flux only in its interior. According to Faraday's law 
\[\nabla  \times {\bf{E}} =  -\frac{1}{c} \frac{{\partial {\bf{B}}}}{{\partial t}}\]
where $\bf{B}$ is the magnetic induction, $c$ is the speed of light, an $\bf{E}$ is the electric field which develops in the SC cylinder until the magnetic induction reaches it's final value. Defining a vector potential via 
\begin{equation}
{\bf{B}} = \nabla  \times {\bf{A}}
\label{DefOfA}
\end{equation}
ensures 
\[{\bf{E}} =  -\frac{1}{c} \frac{{\partial {\bf{A}}}}{{\partial t}} + \nabla U\] where $U$ is an arbitrary function. $\bf{A}$ is not determined uniquely and the gradient of any function could be added to it; a property known as a gauge freedom. We define further 
\[ U = \frac{\hbar }{e^*} \frac{{\partial \phi }}{{\partial t}}\]
where $\hbar$ is the Planck constant and ${e^*}$ is the carriers charge. We assume, with a grain of salt, that a SC can be described by friction free motion of the charge carriers. We will return to this assumption shortly, but for now, its consequence is that the current density in the SC cylinder ${\bf{j}}$ is given by 
\begin{equation}
{\bf{j}} = n{e^*}{\bf{V}} = \frac{{{ne^{*2}}}}{{{m^*}}}\int\limits_0^t {{\bf{E}}dt}  =  - {\rho _s}({\bf{A}} - \frac{{\hbar c}}{{{e^*}}}\nabla \phi )
\label{London}
\end{equation}
where 
\[{\rho _s} = \frac{{n{{e^*}^2}}}{{m^*c}}\] 
is called stiffness, ${m^*}$,  $n$, and $\bf{V}$ are the carriers mass, density, and velocity, respectively. This is known as the London equation in its gauge invariant form. An embedded assumption in our derivation is that the carriers move around the cylinder in circles experiencing a position independent electric field, hence the partial derivative and full integration with respect to time cancel each-other. 

It is important to mention that the London equation has broader validity than the derivation presented here. It is valid for all SC shapes, and it predicts the Meissner effect even if the field is turned on before the sample is cooled, and current flow in the SC is dissipative. To obtain the Meissner effect, one takes the rotor of Ampere's law in a steady state
\begin{equation}
\nabla  \times {\bf{B}} = \frac{{4\pi }}{c}{\bf{j}}
\label{Ampere}
\end{equation}
and uses Eq. \ref{London} to generate the partial differential equation 
\[\nabla  \times \nabla  \times {\bf{B}} =-\Delta {\bf{B}}=  - \frac{{4\pi }}{c}{\rho _s}{\bf{B}}\] 
who's solution is a magnetic induction decaying into the sample exponentially with
\[\frac{1}{{{\lambda ^2}}} = \frac{{4\pi }}{c}{\rho _s}.\]
The Meissner effect and zero resistance are intermittently related since to expel $\bf{B}$ out of a SC, current must run in it indefinitely without dissipation, and produce an opposing field to the applied one.

Before the invention of the Stiffnessometer all measurements of stiffness where done via the penetration depth $\lambda$, and not by the original definition of Eq.~\ref{London}. Finally, due to the current $\bf{j}$, the SC generates it's own magnetic induction, and vector potential. The real $\bf{j}$ in the SC is due to both the applied and self vector potentials as will be discussed below.

In the experiment, a pickup loop circles both the coil and SC cylinder, as depicted in Fig.~\ref{Setup}. The pickup loop is connected to a superconducting quantum interference device (SQUID), and measures the flux $\Phi $ from both coil and SC cylinder through the loop. By subtracting measurement above $T_c$ from a measurement below $T_c$ the flux contribution from the cylinder $\Phi_{sc} $ can be determined. This flux is given by 
\[{\Phi _{sc}} = 2\pi {R_{pl}}{A_{sc}(R_{pl})}\]
where $R_{pl}$ is the pickup loop radius, and $A_{sc}(R_{pl})$ is the vector potential generated by the SC cylinder only in the pickup loop position.

The relation between the applied flux and the measured SC flux $\Phi _{sc}$ is determined by the stiffness, hence the instrument's name. Moreover, it is expected that for high enough applied flux, the current density in the cylinder will be so large that it will cross $j_c$ and SC will be destroyed allowing for the measurement of $j_c$ simultaneously with the stiffness. 

The superconducting carrier density $n$ in a SC is temperature dependent and undergoes a phase transition. It starts from zero at $T_c$ and increases as the temperature is lowered. In addition, SC are dissipation free, hence can be described by a free energy. Consequently, Ginzburg  and Landau (GL) invented a free energy which treat $n$ as an order parameter, on one hand, and produce the London equation on the other hand. The variable of this theory is an order parameter $\Psi(\bf{r})$, which is a complex, space dependent function. 
${\left| \Psi  \right|^2}$ is proportional to $n$. Its phase is the same $\phi(\bf{r})$ as in Eq.~\ref{London}. Treating $\phi$ as a phase allow it to be a multi valued function. Magnetic induction enter the theory either via a vector potential $\bf{A}$ or as $\nabla \times \bf{A}$ to account for the energy associated with $\bf{B}$ at temperatures above $T_c$.

\section{Model derivation}\label{sec:model}
The Ginzburg Landau free energy reads as
\[
\Energy=\int\left(\frac{|\nabla\times {\bf A}|^2}{8\pi}+\frac1{2m^*}\left|\left(\frac{\hbar}{i}\nabla-\frac{e^*}{c}{\bf A}\right)\Psi\right|^2+\alpha|\Psi|^2+\frac{\beta}2|\Psi|^4\right)d{\bf x},
\]
where the variables are the same variable as used in the derivation of Section~\ref{sec:Stiffnessometer}, and are defined for completeness:~${\bf A}$ is the vector potential,~$\Psi$ is the superconducting state order parameter,~$m^*$ is an effective pair mass,~$\hbar$ is the Planck constant,~$e^*$ is the charge of the SC carriers,~$c$ is the speed of light, and~$\alpha<0<\beta$ are the phenomenological Ginzburg-Landau coefficients.  The order parameter ${\left| \Psi  \right|^2}$ is proportional to the superconducting carrier density.  

The spatial variables are normalized by a reference radius~$R_{\rm pl}$, which in this case is taken to be the radius of the pickup loop,
\[
\tilde {\bf x}=\frac{\bf x}{R_{\rm pl}},
\]
the vector potential is normalized according to
\[
\tilde{\bf A}=\frac{{\bf A}}{A_0},\quad A_0=\frac{\Phi_0}{2\pi R_{\rm pl}},\quad \Phi_0=\frac{2\pi \hbar c}{e^*},
\]
where~$\Phi_0$ is the flux quanta,
and the superconducting state order parameter takes the polar form~$\Psi=\psi e^{i\phi}$ 
\[
\tilde \psi=\frac{\psi}{\psi_\infty},\quad \psi_\infty=\sqrt{-\frac{\alpha}{\beta}}.
\]
$\left|\tilde{\psi} \right|=1$ accounts for maximal  superconducting carrier density or a full superconductive state and~$\left| \tilde \psi  \right|=0$ accounts for a non-superconductive state. In what follows, we consider only the non-dimensional variables and omit the tildes.  
The non-dimensional Ginzburg Landau free energy reads as
\[
\Energy=\frac12\int \left(\lambda^2\varepsilon^2 |\nabla\times\vec A|^2+\varepsilon^2|\nabla \psi|^2+\varepsilon^2 \psi^2|\vec A-\nabla\phi|^2+\frac12\psi^4-\psi^2\right)\,d{\bf x}.
\]
where~$\varepsilon$ is the normalized coherence length and~$\lambda$ is the normalized penetration depth,
\[
\varepsilon^2=-\frac1{R_{\rm pl}^2}\frac{\hbar^2}{2m^* \alpha},\quad \lambda^2=-\frac1{R_{\rm pl}^2}\frac{\beta}{\alpha}\frac{m^* c^2}{4\pi (e^*)^2}.
\]

Let us consider a system of a superconductive hollow cylinder in the region~$0<r_{\rm in}<r<r_{\rm out}$ and an infinitely long coil of infinitesimal diameter on the z axis such that its own vector potential is
\[
\vec A_{\rm coil}=\frac{J}{r}\hat \theta,
\]
where~$J=\Phi/\Phi_0$, and~$\Phi$ is the flux inside the coil.
Due to the symmetry of the system, the vector potential is tangential and depends only on the radius
\[
\vec{A}=A(r)\,\hat\theta=\left[A_{\rm sc}(r)+A_{\rm coil}\right]\hat\theta,
\]
where~$A(r)$ and~$A_{\rm sc}$ are the total and superconductor tangential components of the vector potential, respectively. The order parameter satisfies in the cylinder region
\begin{equation}
\psi=\psi(r)\ge 0,\quad \nabla\phi=\frac{m}{r}\hat\theta,\quad m\in Z,\quad r_{\rm in}<r<r_{\rm out},
\label{eq:parameterslimits}
\end{equation}
and~$\psi\equiv0$ outside the cylinder region.  Note that~$\psi\ge 0$ since it is the absolute value of the superconducting state order parameter, and the above form of $\nabla\phi$ is a gauge choice, namely, the Coulomb gauge.

The Ginzburg-Landau free energy takes the form
\begin{equation}\label{eq:energyAsc}
\Energy=\pi\int_{r=0}^\infty \left\{\lambda^2\varepsilon^2 \left(A^\prime_{\rm sc}(r)+\frac{A_{\rm sc}}{r}\right)^2+\varepsilon^2\psi_r^2+\varepsilon^2\left(A_{\rm sc}+\frac{J-m}{r}\right)^2\psi^2(r)+
\frac{\psi^4}2-\psi^2\right\}\,rdr.
\end{equation}
The Ginzburg-Landau equation 
\[
\frac{\delta \Energy}{\delta A_{\rm sc}}=0,\quad \frac{\delta \Energy}{\delta \psi}=0,
\]
 reads as
\begin{subequations}\label{eq:system}
\begin{equation}\label{eq:Asc}
A^{\prime\prime}_{\rm sc}(r)+\frac{A^\prime_{\rm sc}}r-\frac{A_{\rm sc}}{r^2}=\frac{1}{\lambda^2}\left(A_{\rm sc}+\frac{J-m}{r}\right)\psi^2(r),\quad A_{\rm sc}(0)=0=A_{\rm sc}(\infty),\qquad r>0,
\end{equation}
and
\begin{equation}\label{eq:g}
\varepsilon^2\left(\psi^{\prime\prime}(r)+\frac{\psi^\prime}r\right)
=\psi^3-\left(1-\varepsilon^2\left(A_{\rm sc}+\frac{J-m}{r}\right)^2\right)\psi,\quad \psi(r)\ge0,\quad \psi^\prime(r_{\rm in})=\psi^\prime(r_{\rm out})=0, \qquad r_{\rm in}<r<r_{\rm out}.
\end{equation}
\end{subequations}
Equation~\eqref{eq:Asc} is Ampere's law, see~\eqref{Ampere}, for the SC magnetic induction in terms of $A_{sc}$ from~\eqref{DefOfA}, and $j$ from~\eqref{London}, which in turn, is determined by the sum of $A_{\rm sc}$ and $A_{\rm coil}$. Outside the cylinder region,~$\psi\equiv0$ and~$A_{\rm sc}$ satisfies the homogenous equation 
\[
A^{\prime\prime}_{\rm sc}+\frac1rA_{\rm sc}^{\prime}-\frac1{r^2}A_{\rm sc}=0,
\]
whose solution is of the general form
\[
A_{\rm sc}^{\rm homogenous}=c_1 r+\frac{c_2}r.
\]
Specifically, for~$0<r<r_{\rm in}$, the boundary condition~$A_{\rm sc}(0)=0$ implies that~$A_{\rm sc}=c_1r$.  Continuity of~$A_{\rm sc}$ at~$r=r_{\rm in}$ further implies that~$c_1=A_{\rm sc}(r_{\rm in})/r_{\rm in}$.
Hence, at the point~$r=r_{\rm in}$, the solution satisfies
\begin{subequations}\label{eq:system_BC_cylinder}
\begin{equation}\label{eq:BCL}
A_{\rm sc}^\prime(r_{\rm in})-\frac{A_{\rm sc}(r_{\rm in})}{r_{\rm in}}=0.
\end{equation}
For~$r>r_{\rm out}$, the boundary condition~$A_{\rm sc}(\infty)=0$ and the continuity of~$A_{\rm sc}$ implies that~$A_{\rm sc}=c_2/r$ where~$c_2=A_{\rm sc}(r_{\rm out})r_{\rm out}$. Hence, at the point~$r=r_{\rm out}$,
\begin{equation}\label{eq:BCR}
A_{\rm sc}^\prime(r_{\rm out})+\frac{A_{\rm sc}(r_{\rm out})}{r_{\rm out}}=0.
\end{equation}
\end{subequations}
Overall, the solution in the whole domain is of the form
\begin{equation}\label{eq:rel_cylinder_domain_to_R}
\begin{cases}
A_{\rm sc}(r_{\rm in})\frac{r}{r_{\rm in}}& r<r_{\rm in},\\
A_{\rm sc}(r)& r_{\rm in}\le r\le r_{\rm out},\\
A_{\rm sc}(r_{\rm out})\frac{r_{\rm out}}{r}& r>r_{\rm out}.
\end{cases}
\end{equation}
The above equations describe a wide range of superconductors with different ratios of~$\lambda$ to~$ \varepsilon$.
In what follows, we will consider the asymptotic regime valid for high temperature superconductors
\[
\varepsilon\ll \lambda\ll1.
\]

Furthermore, without loss of generality, we consider the case~$m=0$. The system~\eqref{eq:system} can be expressed in terms of numerous relevant physical quantities such as~$A$ or~$A_{\rm sc}$.  It also involves several parameters on different scales, e.g.,~$\varepsilon$,~$\lambda$ and~$J$, and can be solved in various domains, e.g., the whole domain or the cylinder region.  A key challenge of this work is to choose which quantity to study.  In what follows, we consider the total vector potential scaled by the normalized flux~$J$,
\begin{equation}\label{eq:AJ}
A_J=\frac1{J}\left(A_{\rm sc}+\frac{J}{r}\right).
\end{equation}
Additionally, we consider a system for both~$A_J$ and~$\psi$ in the cylinder region~$0<r_{\rm in}<r<r_{\rm out}$. 

Substituting~\eqref{eq:AJ} in~\eqref{eq:system} and using the boundary conditions~\eqref{eq:system_BC_cylinder} gives rise to the following system for~$A_J$ and~$\psi$,
\begin{subequations}\label{eq:system_ring_region}
\begin{equation}\label{eq:A}
A_J^{\prime\prime}(r)+\frac{A_J^\prime}r-\frac{A_J}{r^2}=\frac{1}{\lambda^2}A_J(r)\psi^2(r),\qquad A_J^\prime(r_{\rm in})-\frac{A_J}{r_{\rm in}}=-\frac{2}{r_{\rm in}^2},\quad A^\prime_J(r_{\rm out})+\frac{A_J}{r_{\rm out}}=0,
\end{equation}
and
\begin{equation}\label{eq:psi}
\varepsilon^2\left(\psi^{\prime\prime}(r)+\frac{\psi^\prime}r\right)
=\psi^3-\left(1-\varepsilon^2J^2A_J^2(r)\right)\psi,\quad \psi(r)\ge0,\quad \psi^\prime(r_{\rm in})=\psi^\prime(r_{\rm out})=0.
\end{equation}
\end{subequations}
The solution of~\eqref{eq:system_ring_region} in the hollow cylinder domain is related to~$A_{\rm sc}$ in the whole domain by~\eqref{eq:rel_cylinder_domain_to_R} and~\eqref{eq:AJ}.
In the subsequent sections, we will show that the above choices, and particularly the choice of~$A_J$, open the way to analysis of the equation.
\section{Analysis}\label{sec:analysis}
Let us consider solutions~$(A_J,\psi)$ of the system~\eqref{eq:system_ring_region}.  In this section, we present the analytic properties of these solutions. Particularly, we prove their monotonicity and provide bounds for their values.

\subsection{Analysis of equation~\eqref{eq:A} for~$A_J$}
\begin{lemma}\label{lem:extremumAJ}
Let~$A_J$ be a solution of~\eqref{eq:A} for a given function~$\psi$, and let~$r_c$ be a critical point of~$A_J$ in~$(r_{\rm in},r_{\rm out})$.   If~$A_J>0$ then~$r_c$ is a strict local minimum point of~$A_J$, and if~$A_J<0$ then~$r_c$ is a strict local maximum point of~$A_J$.  
\end{lemma}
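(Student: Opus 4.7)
The plan is to use the differential equation itself to evaluate $A_J''(r_c)$ at the critical point and read off its sign directly. Since $r_c \in (r_{\rm in}, r_{\rm out})$ is a critical point, by definition $A_J'(r_c) = 0$. Substituting this into equation~\eqref{eq:A} collapses it to
\[
A_J''(r_c) \;=\; A_J(r_c)\left(\frac{1}{r_c^2} + \frac{\psi^2(r_c)}{\lambda^2}\right).
\]
The bracketed coefficient is strictly positive: $r_c > r_{\rm in} > 0$ ensures $1/r_c^2 > 0$, and $\psi^2(r_c)/\lambda^2 \ge 0$ regardless of the sign or value of $\psi$.

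Consequently, the sign of $A_J''(r_c)$ is completely determined by the sign of $A_J(r_c)$. If $A_J(r_c) > 0$, then $A_J''(r_c) > 0$ strictly, and the standard second-derivative test yields that $r_c$ is a strict local minimum. Symmetrically, if $A_J(r_c) < 0$, then $A_J''(r_c) < 0$ strictly and $r_c$ is a strict local maximum. Strictness of the inequality for $A_J''(r_c)$ (not merely non-strict) is what upgrades the conclusion to a \emph{strict} extremum, and this is precisely what is guaranteed by the positivity of the bracketed coefficient together with $A_J(r_c) \ne 0$.

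There is no real obstacle here; the only subtlety worth flagging is that the lemma's hypothesis implicitly rules out the degenerate case $A_J(r_c) = 0$, where the same computation would give $A_J''(r_c) = 0$ and the second-derivative test would be inconclusive. Since the statement assumes $A_J > 0$ or $A_J < 0$ at $r_c$, this case does not arise. The proof is therefore a one-line substitution followed by the second-derivative test.
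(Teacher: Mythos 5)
Your proof is correct and follows essentially the same route as the paper's: evaluate equation~\eqref{eq:A} at the critical point where $A_J'(r_c)=0$, obtain $A_J''(r_c)=\left[\frac{1}{r_c^2}+\frac{\psi^2(r_c)}{\lambda^2}\right]A_J(r_c)$, and conclude that ${\rm sign}[A_J''(r_c)]={\rm sign}[A_J(r_c)]$. Your added remarks on the strict positivity of the coefficient (using $r_c>r_{\rm in}>0$) and on the degenerate case $A_J(r_c)=0$ are sensible elaborations of the same argument.
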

\begin{proof}
The point~$r_c$ is a critical point of~$A_J$, hence~$A_J^\prime(r_c)=0$.  Therefore, by~\eqref{eq:A}, 
\[
A_J^{\prime\prime}(r_c)=\left[\frac1{r_c^2}+\frac{\psi^2(r_c)}{\lambda^2}\right]A_J(r_c).
\]
Hence,~${\rm sign}[A_J^{\prime\prime}(r_c)]={\rm sign}[A_J(r_c)]$.
\end{proof}
\begin{lemma}\label{lem:AJ}
Let~$A_J$ be a solution of~\eqref{eq:A}.  Then, for~$r_{\rm in}< r< r_{\rm out}$,
\begin{equation}\label{eq:lem_AJ_result}
0< A_J<\frac{2}{r_{\rm in}},\quad A_J^\prime < 0.
\end{equation} 
\end{lemma}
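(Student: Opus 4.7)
The plan is to prove the three conclusions in sequence, leveraging Lemma~\ref{lem:extremumAJ}, which forces interior critical points of $A_J$ to be strict local minima when $A_J>0$ (and strict local maxima when $A_J<0$), together with the two Robin boundary conditions on $A_J$.

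First I would establish positivity, $A_J>0$ on $[r_{\rm in},r_{\rm out}]$, by contradiction. Consider the global minimum point $r_\ast$ and assume $A_J(r_\ast)\le 0$. If $r_\ast$ is interior with $A_J(r_\ast)<0$, Lemma~\ref{lem:extremumAJ} forces $r_\ast$ to be a strict local \emph{maximum}, a contradiction; if instead $A_J(r_\ast)=0$ and $A_J'(r_\ast)=0$, uniqueness for the linear ODE~\eqref{eq:A} gives $A_J\equiv 0$, which is incompatible with the inhomogeneous left BC $A_J'(r_{\rm in})-A_J(r_{\rm in})/r_{\rm in}=-2/r_{\rm in}^2$. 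If $r_\ast=r_{\rm in}$, then $A_J'(r_{\rm in})\ge 0$, yet the left BC yields $A_J'(r_{\rm in})=A_J(r_{\rm in})/r_{\rm in}-2/r_{\rm in}^2<0$, a contradiction. A symmetric argument using the right BC $A_J'(r_{\rm out})=-A_J(r_{\rm out})/r_{\rm out}$ rules out $r_\ast=r_{\rm out}$.

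Next, with $A_J>0$ secured, Lemma~\ref{lem:extremumAJ} makes every interior critical point a strict local minimum. The right BC now gives $A_J'(r_{\rm out})=-A_J(r_{\rm out})/r_{\rm out}<0$. Suppose toward a contradiction that $A_J'$ vanishes somewhere in $(r_{\rm in},r_{\rm out})$, and let $r_c$ be the \emph{rightmost} such zero. By Lemma~\ref{lem:extremumAJ}, $A_J$ has a strict local minimum at $r_c$, so $A_J'>0$ just to the right of $r_c$; but then $A_J'$ must change sign again before reaching $r_{\rm out}$, producing another zero in $(r_c,r_{\rm out})$ and contradicting maximality of $r_c$. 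Hence $A_J'\ne 0$ throughout $(r_{\rm in},r_{\rm out})$, and by continuity with $A_J'(r_{\rm out})<0$ I obtain $A_J'<0$ on the open interval.

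Finally, strict decrease forces the maximum of $A_J$ on $[r_{\rm in},r_{\rm out}]$ to occur at $r_{\rm in}$, and continuity yields $A_J'(r_{\rm in})\le 0$. Substituting into the left BC gives $A_J(r_{\rm in})\le 2/r_{\rm in}$. To upgrade this to strict inequality, I would observe that $A_J(r_{\rm in})=2/r_{\rm in}$ forces $A_J'(r_{\rm in})=0$, and then evaluating~\eqref{eq:A} at $r_{\rm in}$ gives $A_J''(r_{\rm in})=[1/r_{\rm in}^2+\psi^2(r_{\rm in})/\lambda^2]A_J(r_{\rm in})>0$, so $A_J$ would be increasing immediately to the right of $r_{\rm in}$, contradicting monotonic decrease. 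Therefore $A_J(r)<A_J(r_{\rm in})<2/r_{\rm in}$ on the interior.

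The main obstacle is the bookkeeping around the degenerate limiting cases, namely $A_J=A_J'=0$ or $A_J=2/r_{\rm in}$ with $A_J'=0$, where Lemma~\ref{lem:extremumAJ} alone only gives weak information and one must invoke ODE uniqueness or the sign of $A_J''$ obtained from~\eqref{eq:A} to promote the inequalities to strict ones.
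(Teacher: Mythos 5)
Your proof is correct, and it reaches the conclusion by a genuinely different decomposition than the paper's. The paper works from the left endpoint outward: it first pins down the solution at $r_{\rm in}$ (proving $0<A_J(r_{\rm in})<2/r_{\rm in}$ and $A_J^\prime(r_{\rm in})<0$ by contradiction with Lemma~\ref{lem:extremumAJ} and the boundary condition at $r_{\rm out}$), then establishes monotonic decrease by tracking the \emph{first} critical point from the left (ruling out the value $0$ there by ODE uniqueness and a positive value by Lemma~\ref{lem:extremumAJ}), and only at the very end deduces positivity as a corollary of monotonicity plus the right boundary condition. You invert this order: positivity first, via a global-minimum argument that treats interior points (Lemma~\ref{lem:extremumAJ}), the degenerate case $A_J=A_J^\prime=0$ (linear ODE uniqueness against the inhomogeneous left boundary condition), and the two endpoints (sign of the one-sided derivative against the Robin conditions) uniformly; then monotonicity via the \emph{rightmost} zero of $A_J^\prime$ and the now-known sign $A_J^\prime(r_{\rm out})=-A_J(r_{\rm out})/r_{\rm out}<0$; and the upper bound $2/r_{\rm in}$ last, with strictness recovered by evaluating $A_J^{\prime\prime}(r_{\rm in})$ from~\eqref{eq:A}. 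The ingredients are identical in both proofs --- Lemma~\ref{lem:extremumAJ}, the two Robin conditions, and uniqueness for the linear initial value problem --- but your maximum-principle-style opening is arguably cleaner than the paper's endpoint case analysis, at the modest cost of needing the extra second-derivative argument at $r_{\rm in}$ to upgrade $A_J(r_{\rm in})\le 2/r_{\rm in}$ to a strict inequality, something the paper gets for free in its first paragraph. One presentational point: when you conclude that $A_J^\prime>0$ just to the right of the rightmost zero $r_c$, strictness of the local minimum alone does not literally give this; you should cite the inequality $A_J^{\prime\prime}(r_c)>0$ from the proof of Lemma~\ref{lem:extremumAJ}, which is exactly what makes the sign change clean.
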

\begin{proof}
We first prove that~\eqref{eq:lem_AJ_result} holds at~$r=r_{\rm in}$.  Let us assume in negation that~$A_J(r_{\rm in})\le0$.  Then, the boundary condition at~$r=r_{\rm in}$ implies that~$A^\prime_J(r_{\rm in})<0$.  
Hence, there exists a surrounding of~$r_{\rm in}$ for which~$A_J(r)<0$ and~$A^\prime_J(r)<0$.  The boundary condition at~$r=r_{\rm out}$ implies that either~$A^\prime(r_{\rm out})>0$ or~$A(r_{\rm out})\ge0$.  In both cases, 
there exists a minimum point of~$A_J(r)$ in~$(r_{\rm in},r_{\rm out})$ at which~$A_J<0$, in negation with Lemma~\ref{lem:extremumAJ}. 
Similarly,~$A_J(r_{\rm in})\ge\frac{2}{r_{\rm in}}$ implies that~$A^\prime_J(r_{\rm in})\ge 0$.  The boundary condition at~$r_{\rm out}$ implies in this case that there exists a local maximum point of~$A_J(r)$ in~$(r_{\rm in},r_{\rm out})$ at which~$A_J>0$, in negation with Lemma~\ref{lem:extremumAJ}.  Therefore,~$0<A_J(r_{\rm in})<\frac{2}{r_{\rm in}}$ and~$A_J^\prime(r_{\rm in})<0$.

 We next prove that~$A_J$ is strictly monotonically decreasing~$(r_{\rm in},r_{\rm out})$.  Let us assume, in negation, that~$A_J$ does not decrease monotonically. In this case, there exists a critical point~$r^*$ of~$A_J$, such that~$A_J^\prime(r)<0$ for~$r_{\rm in}\le r<r^*$.  By Lemma~\ref{lem:extremumAJ},~$A_J(r^*)\ge0$ since~$r^*$ is not a local maximum point of~$A_J$.  Therefore,
\begin{equation}\label{eq:lem2_A_positive_r*}
A_J(r)>0,\quad r_{\rm in}<r<r^*.
\end{equation}
We first rule out the option~$A_J(r^*)=0$.  Indeed, Equation~\eqref{eq:A} is a second order linear ODE with continuous variable coefficients in the cylinder region~$[r_{\rm in},r_{\rm out}]$ (Note that~$r_{\rm in}>0$).  Therefore, the associated initial value problem
\[
A_J^{\prime\prime}(r)+\frac{A_J^\prime}r-\frac{A_J}{r^2}=\frac{1}{\lambda^2}A_J(r)\psi^2(r),\quad A(r^*)=A^\prime(r^*)=0,\qquad r_{\rm in}<r<r^*,
\]
has a unique solution, which in this case is~$A_J\equiv0$.  In contradiction with~\eqref{eq:lem2_A_positive_r*}.
We next rule out the option~$A_J(r^*)>0$.  If~$A_J(r^*)>0$, then by Lemma~\ref{lem:extremumAJ},~$r^*$ is a local minimum point of~$A_J$.  There are again two options: If~$A_J^\prime\ge0$ in~$(r^*,r_{\rm out})$, then~$A_J(r_{\rm out})>0$ and~$A_J^\prime(r_{\rm out})\ge0$ and hence~$A_J$ does not satisfy the boundary condition at~$r=r_{\rm out}$, see, e.g., red dashed curve in Figure~\ref{fig:Lemma2Illustration}.  The second option is that~$A_J$ is not monotonically increasing in~$(r^*,r_{\rm out})$.  In this case, there exists additional extremum points in~$(r^*,r_{\rm out})$, and in particular a local maximum point with~$A_J>0$, in contraction with Lemma~\ref{lem:extremumAJ}.  Therefore,~$A_J$ must decrease monotonically.

We finally prove that~$A_J$ is strictly positive in~$(r_{\rm in},r_{\rm out})$.
Since~$A_J$ is strictly monotonically decreasing in~$(r_{\rm in},r_{\rm out})$, the boundary condition at~$r=r_{\rm out}$ implies that~$A_J(r_{\rm out})\ge 0$, hence~$A_J>0$ in~$(r_{\rm in},r_{\rm out})$, see, e.g., blue solid curve in Figure~\ref{fig:Lemma2Illustration}.
\end{proof}

\begin{figure}[ht!]
\begin{center}
\scalebox{0.5}{\includegraphics{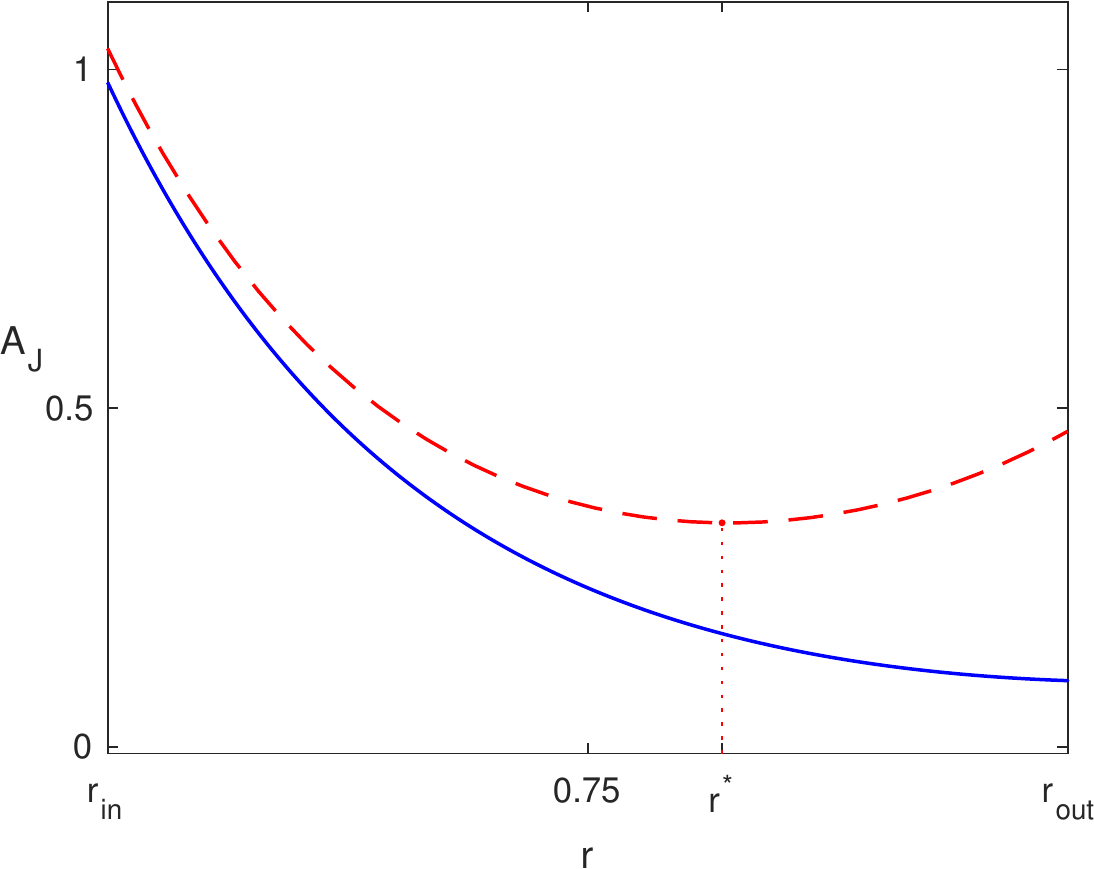}}
\caption{Solution~$A_J$ of boundary value problem~\eqref{eq:A} for~$r_{\rm in}=0.5$,~$r_{\rm out}=1.5$,~$\lambda=0.2$ and~$\psi=\sqrt{1-0.01e^{-r}}$ ({\color{blue} solid}). Also plotted is a solutions~$A^\pm_{\rm IVP}$ of the associated initial value problem~\eqref{eq:A} with the same parameters as above, with the initial condition~$A_{\rm IVP}(r_{\rm in})=c$ and~$A^\prime_{\rm IVP}(r_{\rm in})=c-2/r_{\rm in}^2$ where~$c=A_J(r_{\rm in})+0.1$ ({\color{red} dashes}). }
\label{fig:Lemma2Illustration}
\end{center}

\end{figure}

\subsection{Analysis of equation~\eqref{eq:psi} for~$\psi$}

\begin{lemma}\label{lem:psi_re}
Let~$\psi$ be a solution of~\eqref{eq:psi} for a given function~$A_J(r)$,  let~$r_c$ be a critical point of~$\psi$ in~$[r_{\rm in},r_{\rm out}]$, and define
\begin{equation}\label{eq:f_psi}
f(r)=
\begin{cases}
\sqrt{1-\varepsilon^2J^2A_J^2(r)},&1-\varepsilon^2J^2A_J^2(r)\ge0,\\
0& \mbox{otherwise}.
\end{cases}
\end{equation}
Then, if~$\psi(r_c)>f(r_c)$,~$r_c$ is a local minimum point of~$\psi$ and if~$\psi(r_c)<f(r_c)$,~$r_c$ is a local maximum point of~$\psi$. 
\end{lemma}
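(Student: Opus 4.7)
The plan is to evaluate equation~\eqref{eq:psi} at the critical point $r_c$, exploit the fact that the first-derivative term disappears there, and read off the sign of $\psi''(r_c)$ by factoring the right-hand side using the definition of $f$ in~\eqref{eq:f_psi}.

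First I would note that since $\psi'(r_c)=0$, equation~\eqref{eq:psi} reduces at $r_c$ to
\[
\varepsilon^2\psi''(r_c)=\psi(r_c)\Bigl[\psi^2(r_c)-\bigl(1-\varepsilon^2 J^2 A_J^2(r_c)\bigr)\Bigr].
\]
I would then split into the two cases that define $f$. If $1-\varepsilon^2 J^2 A_J^2(r_c)\ge 0$, so that $f(r_c)^2=1-\varepsilon^2 J^2 A_J^2(r_c)$, the bracket factors as $(\psi(r_c)-f(r_c))(\psi(r_c)+f(r_c))$, and since $\psi\ge 0$ and $f\ge 0$, the sign of $\varepsilon^2\psi''(r_c)$ coincides with that of $\psi(r_c)-f(r_c)$. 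If $1-\varepsilon^2 J^2 A_J^2(r_c)<0$, then $f(r_c)=0$, and the bracket is strictly positive whenever $\psi(r_c)>0=f(r_c)$; while the complementary hypothesis $\psi(r_c)<f(r_c)=0$ is vacuous because $\psi\ge 0$. Combining these, $\psi(r_c)>f(r_c)$ always gives $\psi''(r_c)>0$, and $\psi(r_c)<f(r_c)$ gives $\psi''(r_c)<0$, which yields the claimed strict local extrema by the second-derivative test.

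The only subtle point is justifying that, in the maximum case, $\psi(r_c)>0$ so that the factor $\psi(r_c)$ does not annihilate $\varepsilon^2\psi''(r_c)$ and leave the sign undetermined. Here $\psi(r_c)<f(r_c)$ together with $f\ge 0$ forces $f(r_c)>0$, so $\psi(r_c)=0$ would mean $\psi(r_c)=\psi'(r_c)=0$; by uniqueness for the second-order ODE~\eqref{eq:psi} with continuous coefficients on $[r_{\rm in},r_{\rm out}]$, this would imply $\psi\equiv 0$, contradicting $f(r_c)>\psi(r_c)=0$ only if we assume a nontrivial $\psi$; in the trivial case $\psi\equiv 0$ every point is vacuously a local maximum, so the statement still holds.

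The main obstacle is essentially bookkeeping: handling the piecewise definition of $f$ cleanly and ruling out the degenerate value $\psi(r_c)=0$ via IVP uniqueness, exactly as was used in Lemma~\ref{lem:AJ}. Beyond that, the argument is a direct one-line application of the equation at the critical point.
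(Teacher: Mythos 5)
Your proof is correct and takes essentially the same route as the paper: evaluate~\eqref{eq:psi} at the critical point where the first-derivative term vanishes, factor the right-hand side as $\psi(\psi-f)(\psi+f)$ using the piecewise definition~\eqref{eq:f_psi}, and read off the sign of $\psi''(r_c)$ via the second-derivative test. Your additional step ruling out the degenerate value $\psi(r_c)=0$ in the maximum case via ODE uniqueness is a refinement the paper's proof silently skips (it asserts $\mathrm{sign}(\psi''(r_c))=\mathrm{sign}(\psi(r_c)-f(r_c))$ without addressing that the relation degenerates when $\psi(r_c)=0$), so your treatment is, if anything, slightly more complete.
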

\begin{proof}
The point~$r_c$ is a critical point of~$A_J$, hence~$\psi^\prime(r_c)=0$.  
When~$1-\varepsilon^2J^2A_J^2(r_c)\ge0$, by~\eqref{eq:psi} and the definition~\eqref{eq:f_psi},
\[
\psi^{\prime\prime}(r_c)=\frac1{\varepsilon^2}\left[\psi^2(r_c)-f^2(r_c)\right]\psi(r_c).
\]
Otherwise,~$f=0$ and~$\psi^{\prime\prime}(r_c)>0$.
In both cases,~$\mbox{sign}(\psi^{\prime\prime}(r_c))=\mbox{sign}(\psi(r_c)-f(r_c))$.  

Therefore, when~$\psi(r_c)<f(r_c)$,~$\psi^{\prime\prime}(r_c)>0$, and~$r_c$ is a local minimum point of~$\psi$.  Similarly, when~$\psi(r_c)>f(r_c)$,~$\psi^{\prime\prime}(r_c)<0$, and~$r_c$ is a local maximum point of~$\psi$.   
\end{proof}
Note, in particular, that the end points~$r_{\rm in}$ and~$r_{\rm out}$ are critical points of~$\psi$, and that Lemma~\ref{lem:psi_re} applied to these  points.
\begin{figure}[ht!]
\begin{center}
\scalebox{0.5}{\includegraphics{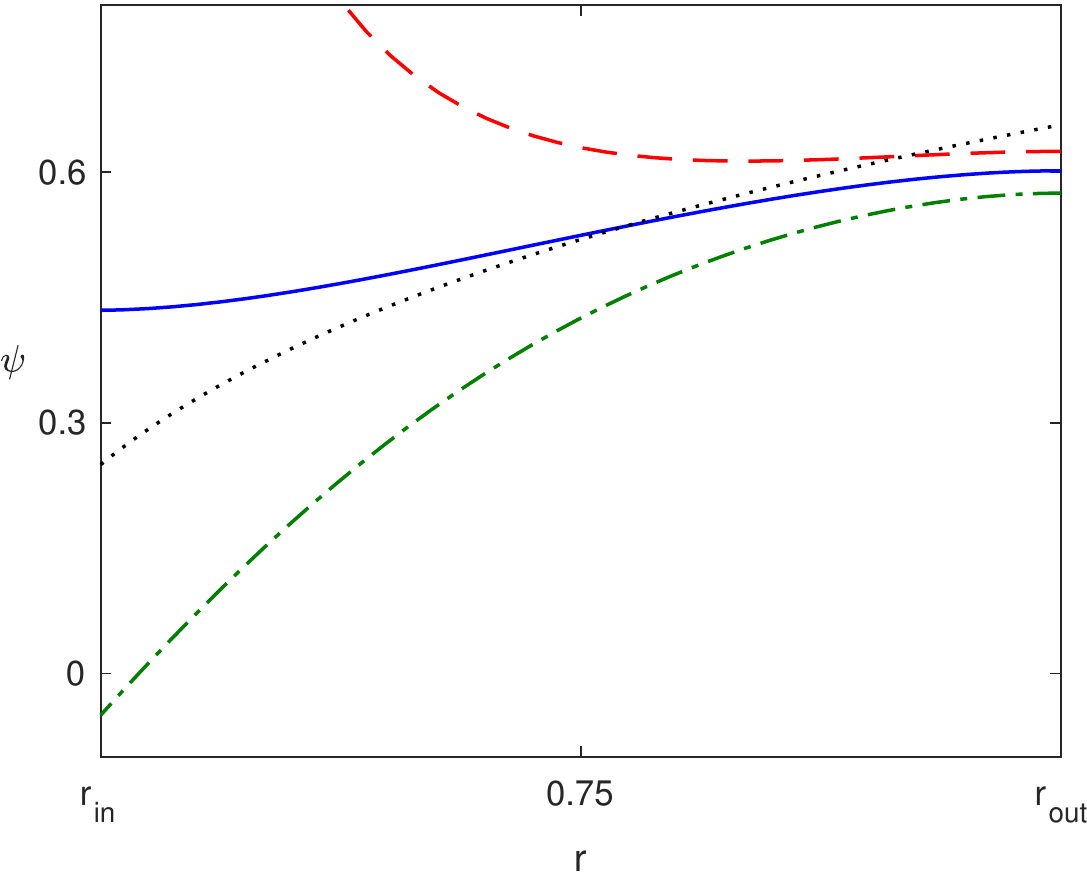}}
\caption{Solution~$\psi$ of boundary value problem~\eqref{eq:psi} for~$r_{\rm in}=0.5$,~$r_{\rm out}=1.5$,~$\varepsilon=0.1$ and~$\varepsilon^2 J^2 A_J^2=1.54e^{-r}$ ({\color{blue} solid}). Also plotted are two solutions~$A^\pm_{\rm IVP}$ of the associated initial value problem~\eqref{eq:psi} with the same parameters as above and with~$\psi^+_{\rm IVP}(r_{\rm in})=\psi(r_{\rm out})+0.025$ ({\color{red} dashes}) and~$\psi^-_{\rm IVP}(r_{\rm in})=\psi(r_{\rm in})-0.025$ ({\color{greenCurve} dash-dots}).  Super-imposed is the curve~$f$ defined by~\eqref{eq:f_psi} (dots).}
\label{fig:Lemma4Illustration}
\end{center}
\end{figure}\begin{lemma}\label{lem:psi}
Let~$\psi$ be a solution of~\eqref{eq:psi} for a given function~$A_J(r)$ which satisfies~\eqref{eq:lem_AJ_result}.  Then,
\begin{equation}\label{eq:lem4res}
0\le f(r_{\rm in})\le \psi \le f(r_{\rm out})\le1,\quad \psi^\prime\ge0,
\end{equation}
where~$f$ is given by~\eqref{eq:f_psi}.
\end{lemma}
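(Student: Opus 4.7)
The plan is to bootstrap off the three preceding lemmas: use Lemma~\ref{lem:AJ} to establish that $f$ is monotone, and then use Lemma~\ref{lem:psi_re} (the second-derivative test) at the global extrema of $\psi$ to trap $\psi$ between $f(r_{\rm in})$ and $f(r_{\rm out})$.

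First I would observe that, by Lemma~\ref{lem:AJ}, $A_J(r)>0$ is strictly decreasing on $(r_{\rm in},r_{\rm out})$, so $A_J^2$ is strictly decreasing, hence $g(r):=1-\varepsilon^2 J^2 A_J^2(r)$ is strictly increasing. Consequently $f(r)=\sqrt{\max(g(r),0)}$ is monotonically non-decreasing, with $f(r_{\rm out})\le 1$ trivially from the definition. This establishes the outer inequalities and the ordering $f(r_{\rm in})\le f(r_{\rm out})$.

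Next I would prove the two pointwise bounds by applying Lemma~\ref{lem:psi_re} at the global minimum and maximum of $\psi$. Let $r_*$ be a point where $\psi$ attains its minimum on $[r_{\rm in},r_{\rm out}]$; this is automatically a critical point (it is either interior, where $\psi^\prime=0$ by calculus, or at $r_{\rm in}$ or $r_{\rm out}$, where $\psi^\prime=0$ by the boundary conditions). If $\psi(r_*)<f(r_*)$ were true, Lemma~\ref{lem:psi_re} would force $r_*$ to be a strict local maximum, contradicting that it is a global minimum. Hence $\psi(r_*)\ge f(r_*)\ge f(r_{\rm in})$, giving $\psi\ge f(r_{\rm in})$ on the whole interval. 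A symmetric argument at the global maximum $r^*$ yields $\psi(r^*)\le f(r^*)\le f(r_{\rm out})$, so $\psi\le f(r_{\rm out})$.

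For the monotonicity $\psi^\prime\ge 0$, I would argue by contradiction. If $\psi$ is not monotonically non-decreasing, then by continuity there exist points $r^*<r_*$ in $[r_{\rm in},r_{\rm out}]$ with $\psi(r^*)>\psi(r_*)$, and I can choose $r^*$ to realize the maximum of $\psi$ on $[r_{\rm in},r_*]$ and $r_*$ to realize a subsequent local minimum; both are critical points of $\psi$. Lemma~\ref{lem:psi_re} then gives $\psi(r^*)\le f(r^*)$ and $\psi(r_*)\ge f(r_*)$, so $f(r_*)\le\psi(r_*)<\psi(r^*)\le f(r^*)$, contradicting $f(r^*)\le f(r_*)$ which follows from $r^*<r_*$ and the monotonicity of $f$ established in the first step.

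The main obstacle I anticipate is the borderline case $\psi(r_c)=f(r_c)$, for which Lemma~\ref{lem:psi_re} gives $\psi^{\prime\prime}(r_c)=0$ and so does not classify $r_c$; this has to be handled by either absorbing it into the non-strict inequalities (since the conclusion $f(r_{\rm in})\le\psi\le f(r_{\rm out})$ itself is non-strict) or by a higher-order expansion of $\psi$ near $r_c$. A secondary nuisance is the possible flat region where $f\equiv 0$ near $r_{\rm in}$, which is harmless since the lower bound $\psi\ge f(r_{\rm in})=0$ is then automatic from the sign condition $\psi\ge 0$ imposed in~\eqref{eq:psi}.
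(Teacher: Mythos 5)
Your proof is correct, and it takes a genuinely different route from the paper's. The paper argues by a case analysis on the character of the endpoint $r=r_{\rm out}$ (local minimum from the left, constant solution, or local maximum from the left), and in the last case sweeps leftward through a sequence of critical points $r_2>r_3>\cdots$, applying Lemma~\ref{lem:psi_re} and the monotonicity of~$f$ at each one and appealing to finiteness of that sequence to terminate. You instead invoke Lemma~\ref{lem:psi_re} only at well-chosen extrema: a global minimum that the lemma would classify as a strict local maximum is absurd, giving $\psi\ge f(r_{\rm in})$, and symmetrically $\psi\le f(r_{\rm out})$; then a ``max followed by min'' pair of critical points plays the monotonicity of~$f$ against itself to give $\psi'\ge0$. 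This buys a shorter argument that treats the constant case uniformly and avoids the paper's iterative construction entirely---in particular you never need the paper's somewhat delicate claim that the sequence of critical points is finite. Two caveats, both shared with the paper's own proof rather than specific to yours. First, your contradictions need the \emph{strict} second-derivative form of Lemma~\ref{lem:psi_re} ($\psi''(r_c)<0$, not merely ``local maximum''), since a non-strict local maximum does not contradict global minimality; strictness does follow from the lemma's proof whenever $\psi(r_c)>0$. Second, the real degenerate case is not the one you flag ($\psi(r_c)=f(r_c)$, which indeed is absorbed harmlessly into the non-strict inequalities) but $\psi(r_c)=0<f(r_c)$, where $\psi''(r_c)=0$ and the second-derivative test is silent: concretely, $\psi\equiv0$ solves~\eqref{eq:psi} for any~$A_J$ yet violates the lower bound whenever $f(r_{\rm in})>0$, so the lemma implicitly concerns non-trivial solutions. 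You can close this hole by noting that a non-trivial solution is strictly positive (at a zero $r_0$ one has $\psi(r_0)=\psi'(r_0)=0$ since $\psi\ge0$, and uniqueness for the initial value problem forces $\psi\equiv0$, the same uniqueness argument the paper uses in Lemma~\ref{lem:AJ}); after that, your global-extremum argument applies verbatim.
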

\begin{proof}
We first  rule out the possibility that~$r=r_{\rm out}$ is a local minimum point of~$\psi$ from the left.  By~\eqref{eq:psi},~$r=r_{\rm out}$ is a critical point of~$\psi$.  If~$r=r_{\rm out}$ is a local minimum point of~$\psi$ from the left, ,then~$\psi^\prime<0$ in a surrounding~$r<r_{\rm out}$.
 Moreover, the boundary condition~$\psi^\prime(r_{\rm in})=0$ ensures that there exists a point~$r_{\rm in}\le r_1<r_{\rm out}$ such that~$\psi^\prime(r_1)=0$ and~$\psi^\prime(r)<0$ for~$r_1<r<r_{\rm out}$.  
Particularly,
\begin{equation}\label{eq:lem4_psi_rc}
\psi(r_1)>\psi(r_{\rm out})
\end{equation}
The point~$r_1$ is not a local minimum point of~$\psi$ since~$\psi^\prime(r)<0$ for~$r_1<r<r_{\rm out}$.  Hence, by Lemma~\ref{lem:psi_re},
\begin{equation}\label{eq:lem4_conc1}
\psi(r_1)\le f(r_1).  
\end{equation}
By Lemma~\ref{lem:AJ},~$f(r)$ is a monotonically increasing function in~$r$.  Therefore,
\[
\psi(r_{\rm out})<\psi(r_1)\le f(r_1)\le f(r_{\rm out})
\]
where the first two inequalities are due to~\eqref{eq:lem4_psi_rc} and~\eqref{eq:lem4_conc1}.
However, the assumption~$r=r_{\rm out}$ is a local minimum point of~$\psi$ implies, by Lemma~\ref{lem:psi_re},
\[
\psi(r_{\rm out})\ge f(r_{\rm out}).
\]
In contradiction.

We next consider the case~$\psi(r)\equiv\psi(r_{\rm out})$.  Since in this case~$\psi^\prime(r_{\rm out})=\psi^{\prime\prime}(r_{\rm out})=0$, by Lemma~\ref{lem:psi_re},~$\psi(r)\equiv f(r_{\rm out})$.  Lemma~\ref{lem:AJ} implies that~$f$ is monotonically non-decreasing, hence~$f(r_{\rm in})\le f(r_{\rm out})$.  We now rule out the possibility that~$f(r_{\rm in})< f(r_{\rm out})$.  Indeed, in this case~$f(r_{\rm in})<f(r_{\rm out})=\psi(r_{\rm in})$.  Hence, by Lemma~\ref{lem:psi_re},~$r=r_{\rm in}$ is a local minimum point of~$\psi$.  In contradiction.  Therefore,~$\psi(r)\equiv\psi(r_{\rm out})$ is possible only when~$f(r)\equiv f(r_{\rm out})$.  By Lemma~\ref{lem:AJ},  this is the case,~$1-\varepsilon^2J^2A_J^2(r)<0$ in the cylinder region, hence~$f\equiv0$ and~$\psi\equiv0$.  In this case, the solution satisfies conditions~\eqref{eq:lem4res}.

Finally, we consider the case that~$r=r_{\rm out}$ is a local maximum point of~$\psi$ from the left.  By Lemma~\ref{lem:psi_re}, in this case,
\[
\psi(r_{\rm out})\le f(r_{\rm out}).
\]
The boundary condition~$\psi^\prime(r_{\rm in})=0$ ensures that there exists a point~$r_{\rm in}\le r_2<r_{\rm out}$ such that~$\psi^\prime(r_2)=0$ and~$\psi^\prime(r)>0$ for~$r_2<r<r_{\rm out}$.  Therefore, by Lemma~\ref{lem:psi_re}, 
\begin{equation}\label{eq:psi_r2}
\psi(r_2)\ge f(r_2).
\end{equation}
If~$r_2=r_{\rm in}$, the solution satisfies~\eqref{eq:lem4res}, see also Figure~\ref{fig:Lemma4Illustration}.  Otherwise,~$r_2>r_{\rm in}$.  If~$r_2$ is a local minimum point of~$\psi$, then there exists a maximum point of~$\psi$ from the right at a point~$r_{\rm in}\le r_M<r_2$ that satisfies~$\psi(r_M)>f(r_M)$.  In contradiction with Lemma~\ref{lem:psi_re}.  Therefore,~$r_2$ is an inflection point and by Lemma~\ref{lem:psi_re},~$\psi(r_2)=f(r_2)$ .  In this case, repeating the argument, the boundary condition~$\psi^\prime(r_{\rm in})=0$ ensures that there exists a point~$r_{\rm in}\le r_3<r_2$ such that~$\psi^\prime(r_3)=0$ and~$\psi^\prime(r)<0$ for~$r_3<r<r_2$.  If~$r_3=r_{\rm in}$, the solution satisfies~\eqref{eq:lem4res}.  Otherwise, the argument is repeated to yields a sequence~$r_2>r_3>\cdots>r_k>r_{\rm in}$ of such critical points.  Differentiability of~$\psi$ ensures that the sequence is finite.  The solution satisfies~$\psi^\prime>0$ for~$r_{\rm in}<r<r_{\rm out}$, except at the critical points where~$\psi^\prime(r_j)=0$.  Further, at~$r=r_{\rm in}$ the solution satisfies~\eqref{eq:psi_r2}.  Therefore, conditions~\eqref{eq:lem4res} are satisfied.
\end{proof}
\section{The fully superconductive case}\label{sec:superconductive}
The works~\cite{kapon2018nature,kapon2019phase} consider the case~$\psi\equiv1$, namely the case for which the whole cylinder region is in a superconductive state.   However,~$\psi\equiv1$ is not a solution of~\eqref{eq:system_ring_region}.  Indeed, by~\eqref{eq:psi}, 
$\psi\equiv1$ implies~$A_J\equiv0$.  But~$A_J\equiv0$ does not satisfy the boundary condition at~$r=r_{\rm in}$ for~$J\ne0$, see~\eqref{eq:A}.  

Here, we focus on the case~$\psi\approx1$ such that to leading order Equation~\eqref{eq:A} for~$A_J$ reduces to
\begin{equation}\label{eq:A_linear}
A_J^{\prime\prime}(r)+\frac{A_J^\prime}r-\frac{A_J}{r^2}=\frac{1}{\lambda^2}A_J(r),\qquad A_J^\prime(r_{\rm in})-\frac{A_J}{r_{\rm in}}=-\frac{2}{r_{\rm in}^2},\quad A^\prime_J(r_{\rm out})+\frac{A_J}{r_{\rm out}}=0.
\end{equation}
The exact solution of~\eqref{eq:A_linear} is given by the profile~$A_J=B(r)$ defined in terms of Bessel functions
\begin{subequations}\label{eq:A_leadingOrder_cxplicit}
\begin{equation}
B(r;\lambda,r_{\rm in},r_{\rm out})=c_1\left(\frac{r_{\rm in}}{\lambda},\frac{r_{\rm out}}{\lambda}\right) I_1\left(\frac{r}{\lambda}\right)+c_2\left(\frac{r_{\rm in}}{\lambda},\frac{r_{\rm out}}{\lambda}\right) K_1\left(\frac{r}{\lambda}\right),
\end{equation}
where~$c_1,c_2$ are determined by the boundary conditions
\begin{equation}
c_1=-\frac1{r_{\rm in}^2}\frac{2\lambda K_0\left(\frac{r_{\rm out}}{\lambda}\right)}{ I_2\left(\frac{r_{\rm rin}}{\lambda}\right)K_0\left(\frac{r_{\rm out}}{\lambda}\right) -K_2\left(\frac{r_{\rm in}}{\lambda}\right)I_0\left(\frac{r_{\rm out}}{\lambda}\right)},
\end{equation}
and
\begin{equation}
c_2=\frac{I_0\left(\frac{r_{\rm out}}{\lambda}\right)}{K_0\left(\frac{r_{\rm out}}{\lambda}\right)}c_1.
\end{equation}
\end{subequations}
Note that the profile~$B$ does not depend on~$J$.  Therefore, in the low flux regime, the vector potential~$A_{\rm sc}$ due to the superconducting ring takes the form
\begin{equation}\label{eq:Asc_J_dependence}
\frac{A_{\rm sc}}{J}=B(r)-\frac1r
\end{equation}
and particularly scales linearly with~$J$.

We have seen that~$\psi\equiv1$ and~$A_J\equiv B$ satisfies equation~\eqref{eq:A}.  Substituting the approximation~$A_J\approx B$ into~$\eqref{eq:psi}$, implies that at~$r=r_{\rm in}$,
\[
\psi^{\prime\prime}(r_{\rm in})\approx J^2B^2(r_{\rm in}).
\]
Therefore,~$\psi\equiv1$ does not satisfy~\eqref{eq:psi} near~$r=r_{\rm in}$. This suggests a boundary layer at~$r=r_{\rm in}$.  
The equation for~$\psi$ \eqref{eq:psi} suggests that the width of the boundary layer is~$O(\varepsilon)$.
We distinguish between the solution~$\psi_{\rm outer}$ outside the boundary layer, aka the outer solution, and the solution in the boundary layer region, aka the inner solution,~$\psi_{\rm in}$.
Substituting the approximation~$A_J\approx B$ into~$\eqref{eq:psi}$, the outer solution satisfies
\[
\psi_{\rm outer}^3(r)-(1-\varepsilon^2J^2B^2(r))\psi_{\rm outer}=0,
\]
or
\[
\psi_{\rm outer}=\sqrt{1-\varepsilon^2J^2B^2(r)}.
\]
Let us seek for an inner solution of the form
\[
\psi_{\rm in}(\rho_\varepsilon)=\sqrt{1-\varepsilon^2J^2B^2(r)}+f(\rho_\varepsilon),\quad \rho_\varepsilon=\frac{r-r_{\rm in}}{\varepsilon},\quad |f|\ll 1.
\]
Substituting~$\psi_{\rm in}(\rho_\varepsilon)$ into~\eqref{eq:psi} yields
\begin{equation}\label{eq:psi_inner_ansatz_f}
f^{\prime\prime}(\rho_\varepsilon)-2f(\rho_\varepsilon)=O(f^2,\varepsilon f,\varepsilon^4),\quad f^\prime(0)=-\psi_{\rm outer}^\prime(0),\quad f(\infty)=0
\end{equation}
where the former boundary conditions assures ~$\psi_{\rm in}^\prime(0)=0$ and the latter condition is the Prandtl matching condition.  Thus, 
\begin{equation}\label{eq:psi_inner}
\psi_{\rm in}=\sqrt{1-\varepsilon^2J^2B^2}+c\,\varepsilon^3  e^{-\sqrt2\rho_\varepsilon},\quad c=-J^2\frac{\sqrt{2}}{2}\frac{B(r_{\rm in})B^\prime(r_{\rm in})}{\sqrt{1 - \varepsilon^2J^2B(r_{\rm in})^2}}=-\frac{\sqrt{2}}{2}\frac{J^2B(r_{\rm in})}{\sqrt{1 - \varepsilon^2J^2B^2(r_{\rm in})}}\left[\frac{B(r_{\rm in})}{r_{\rm in}}-\frac{2}{r_{\rm in}^2}\right].
\end{equation}

The approximation error in~$\psi_{\rm in}$ is~$O(f^2,\varepsilon f)$, see~\eqref{eq:psi_inner_ansatz_f}.  By~\eqref{eq:psi_inner},~$f=O(\varepsilon^3)$.  Therefore,
\begin{subequations}\label{eq:profilesApprox}
\begin{equation}\label{eq:psi_approx}
\psi^{\rm approx}=\sqrt{1-\varepsilon^2J^2B^2(r)}-\frac{\sqrt{2}\,\varepsilon^3 J^2B(r_{\rm in})}{2\sqrt{1 - \varepsilon^2J^2B^2(r_{\rm in})}}\left[\frac{B(r_{\rm in})}{r_{\rm in}}-\frac{2}{r_{\rm in}^2}\right]\,e^{-\sqrt2\rho_\varepsilon}+O(\varepsilon^4).
\end{equation}
The reduction from~\eqref{eq:A} for~$A_J$ to~\eqref{eq:A_linear} for~$B$ relied on the approximation~$\psi^2\approx 1$.  Approximation~\eqref{eq:psi_approx} for~$\psi$ implies that 
\[
\psi=\sqrt{1-\varepsilon^2J^2B^2(r)}+O(\varepsilon^3)=1-\frac12\varepsilon^2J^2B^2(r)+O(\varepsilon^3).
\]
Namely, that the reduction from~\eqref{eq:A} to~\eqref{eq:A_linear} introduced an~$O(\varepsilon^2)$ error.  Hence,
\begin{equation}\label{eq:AJ_approx}
A_J^{\rm approx}(r)=B(r;\lambda,r_{\rm in},r_{\rm out})+O(\varepsilon^2),
\end{equation}
where~$B$ is given by~\eqref{eq:A_leadingOrder_cxplicit}.
\end{subequations}
Note that the approximation~\eqref{eq:AJ_approx} was not derived in the~$\lambda\ll1$, and is valid to larger~$\lambda$.  This large region of validity  is attained since~\eqref{eq:A_linear} has an explicit solution.  In the subsequent sections, we will present cases in which an explicit solution to the leading order equation is not available, and further approximation utilizing~$\lambda\ll1$ is required.   To address these cases, it is helpful to also consider an approximation of~$B(r)$ and~$\psi^{\rm approx}$ for~$\lambda\ll1$.
An asymptotic expansion of~$B(r)$~\eqref{eq:A_leadingOrder_cxplicit} for~$\lambda\ll1$ yields
\begin{equation}\label{eq:A_asympt}
B^{\rm asympt}=\frac{2\lambda}{\sqrt{r_{\rm in}^{3}r}}e^{-\rho}\left[1-\frac38\frac{5 r - r_{\rm in}}{r_{\rm in}r}\lambda+\frac{15}{128}\frac{23r^2-6rr_{\rm in}-r_{\rm in}^2}{r_{\rm in}^2r^2}\lambda^2+O(\lambda^3)\right].
\end{equation}
Substituting~\eqref{eq:A_asympt} in~\eqref{eq:profilesApprox} gives rise to the leading order approximation
\begin{equation}\label{eq:leadingOrderSmallJ}
A_J=\frac{2\lambda}{r_{\rm in}^2}e^{-\rho}+O(\lambda^2),\qquad \psi=1-\frac{2\lambda^2\varepsilon^2 J^2}{r_{\rm in}^4}e^{-2\rho}+\frac{2\sqrt{2}\lambda \varepsilon^3J^2}{r_{\rm in}^4}e^{-\sqrt2\rho_\varepsilon}+O(\varepsilon^2\lambda^3,\varepsilon^3\lambda^2),
\end{equation}
where
\[\rho=\frac{r-r_{\rm in}}{\lambda},\quad \rho_\varepsilon=\frac{r-r_{\rm in}}{\varepsilon}.\] 
Particularly,~$\psi$ has a double boundary layer at~$r=r_{\rm in}$: A boundary layer of width~$\lambda/2$, and an internal  layer of width~$\varepsilon/\sqrt{2}$.  
\subsection{Numerical verification}
We now present a numerical verification of the asymptotic results presented in this section.
In Figure~\ref{fig:lowCurrentRegime_profiles} we compare between a numerical solution of~\eqref{eq:system_ring_region} and its corresponding approximation~\eqref{eq:profilesApprox} for~$\varepsilon=5\cdot 10^{-3}$.  The approximation error for~$\psi$ is~$O(\varepsilon^4)$ where in this case~$\varepsilon^4\approx 6\cdot10^{-10}$.  Therefore, as expected, the curves~$\psi$ and~$\psi^{\rm approx}$ are indistinguishable, see Figure~\ref{fig:lowCurrentRegime_profiles}A.  Similarly, the curves~$A_J$ and~$A_J^{\rm approx}$ are indistinguishable, see Figure~\ref{fig:lowCurrentRegime_profiles}B.  We further focus on the internal boundary layer region~$r-r_{\rm in}=O(\varepsilon)$, see inset graphs in Figure~\ref{fig:lowCurrentRegime_profiles}, and observe that as expected~$\psi$ deviates from the outer solution~$\psi_{\rm outer}$ in the boundary layer, but~$A_J$ does not deviate from~$B$ in this region. 
\begin{figure}[ht!]
\begin{center}
\scalebox{0.75}{\includegraphics{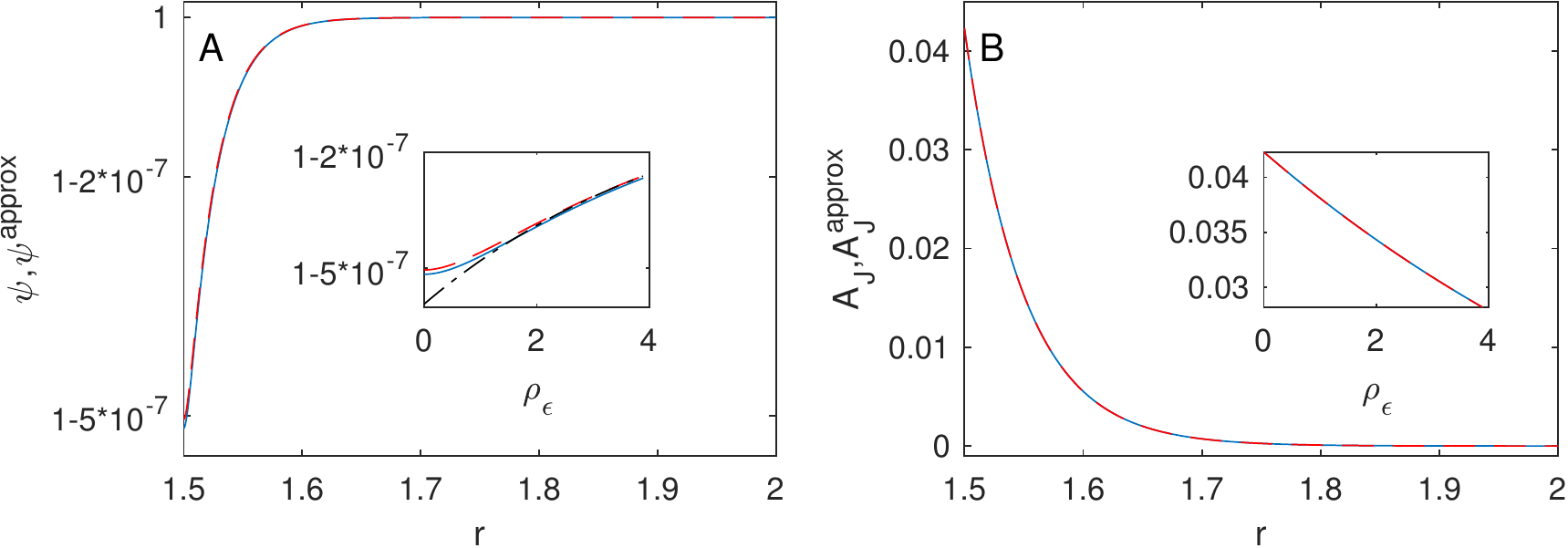}}
\caption{Numerical solution of~\eqref{eq:system_ring_region} for~$r_{\rm in}=1.5$,~$r_{\rm out}=2$,~$\varepsilon=0.005$, and~$\lambda=0.05$ ({\color{blue} solid}), as well as the corresponding approximation~\eqref{eq:profilesApprox}. ({\color{red} dashes}).  Curves are indistinguishable.  Insets present the same data, in the boundary layer region~$r_{\rm in}<r<r_{\rm in}+\varepsilon \rho_\varepsilon$.  Super-imposed in the inset of A is the outer solution approximation~$\psi_{\rm outer}$ (dash-dots). Graph A:~$\psi$.  Graph B:~$A_J$.  }
\label{fig:lowCurrentRegime_profiles}
\end{center}
\end{figure}

Next we consider the approximation errors~$E=\|\psi-\psi^{\rm approx}\|_{\infty}$ and~$\|A_J-A_J^{\rm approx}\|_{\infty}$, where~$\psi,\,A_J$ are numerical solutions of~\eqref{eq:system_ring_region}, and~$\psi^{\rm approx},\,A_J^{\rm approx}$ are the corresponding approximations~\eqref{eq:profilesApprox}, respectively.  We observe that, as expected by~\eqref{eq:profilesApprox},~$\|\psi-\psi^{\rm approx}\|_\infty=O(\varepsilon^4)$ and~$\|A_J-A_J^{\rm approx}\|_\infty=O(\varepsilon^2)$, see Figures~\ref{fig:lowCurrentRegime_E_vs_epsilon}A and ~\ref{fig:lowCurrentRegime_E_vs_epsilon}B, respectively. 
\begin{figure}[ht!]
\begin{center}
\scalebox{0.75}{\includegraphics{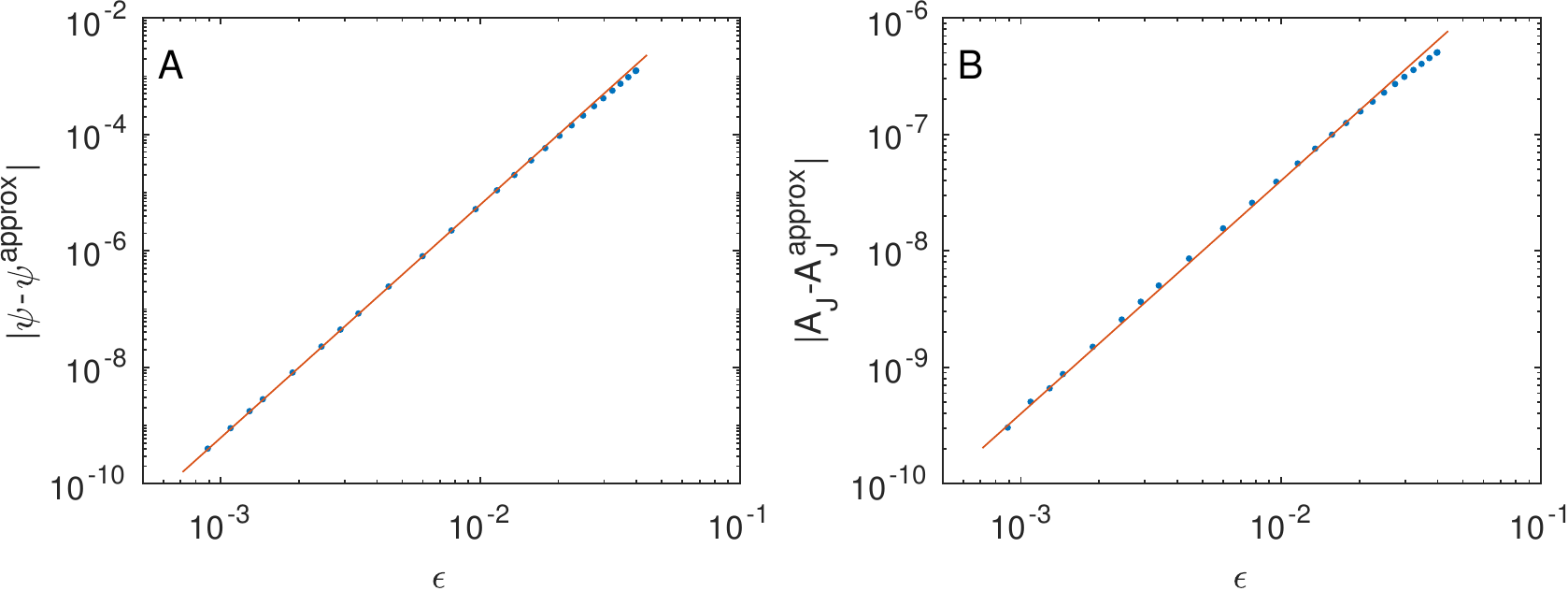}}
\caption{Approximation error~$\|\psi-\psi^{\rm approx}\|_\infty$ (graph A) and~$\|A_J-A_J^{\rm approx}\|_\infty$ (graph B) as a function of~$\varepsilon$ where~$\psi$ and~$A_J$ are numerical solutions of~\eqref{eq:system_ring_region} for~$r_{\rm in}=1.5$,~$r_{\rm out}=2$ and~$\lambda=0.05$ ({\color{blue} dots}), and~$\psi^{\rm approx}$ and~$A_J^{\rm approx}$ are the corresponding approximations~\eqref{eq:profilesApprox}. Super-imposed in graph A is the curve~$620\,\varepsilon^4$ and in graph B the curve~$0.0004\,\varepsilon^2$ ({\color{red} solid}).}
\label{fig:lowCurrentRegime_E_vs_epsilon}
\end{center}
\end{figure}
\subsection{Emerging picture - low flux regime}
\begin{figure}[ht!]
\begin{center}
\scalebox{0.5}{\includegraphics{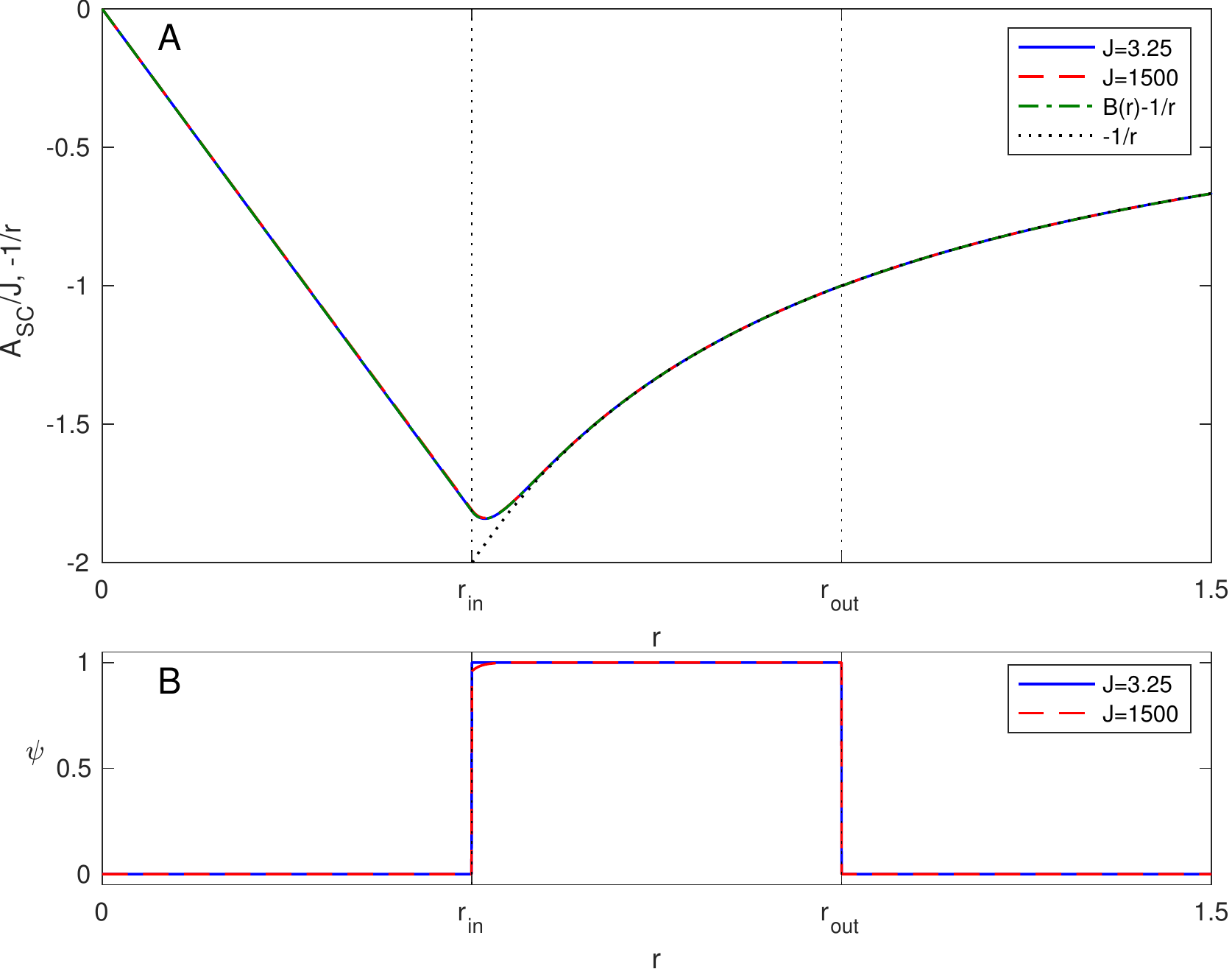}}
\caption{A: Graph of scaled vector potential~$A_{\rm sc}/J=A_J-1/r$, where~$A_J$ is the numerical solution of system~\eqref{eq:system_ring_region} for~$r_{\rm in}=0.5$,~$r_{\rm out}=1$,~$\varepsilon=0.001$, and~$\lambda=0.025$ for~$J=3.25$ ({\color{blue} solid}), ~$J=1500$ ({\color{red} dashes}).  Super-imposed is the profile~$B(r;r_{\rm in}=0.5,r_{\rm out}=1,\lambda=0.025)-1/r$ given by \eqref{eq:A_leadingOrder_cxplicit} ({\color{greenCurve} dash-dots}) and the function~$-1/r$ (dots).  The first three curves are indistinguishable.  B: Graph of order parameter~$\psi$ corresponding to the two numerical solutions of~\eqref{eq:system_ring_region} presented in A.  The two curves are distinguishable only near~$r=r_{\rm in}$.}
\label{fig:generalPictureAJ_lowCurrent}
\end{center}
\end{figure}It is instructive to consider the results of the above analysis in terms of the vector potential~$A_{\rm sc}$ due to the superconductor.  At low flux regimes, the scaled profile~$A_{\rm sc}/J$ is shown to be independent of~$J$ and well approximated by the profile $B$ given by \eqref{eq:A_leadingOrder_cxplicit} via the relation~\eqref{eq:Asc_J_dependence}.  To demonstrate this, in Figure~\ref{fig:generalPictureAJ_lowCurrent}A we plot the scaled profiles~$A_{\rm sc}/J=A_J-1/r$ where~$A_J$ are the numerical solutions of system~\eqref{eq:system_ring_region} for~$J=3.25$ and~$J=1500$, as well as the profile~$B(r;r_{\rm in},r_{\rm out},\lambda)-1/r$ with appropriate (non-fitted) parameters.  In Figure~\ref{fig:generalPictureAJ_lowCurrent}B we plot the profiles~$\psi$ for the two case, and observed that both solutions correspond to fully super-conductive cases, namely the current~$J=1500$ is within the low flux regime.  The three curves in Figure~\ref{fig:generalPictureAJ_lowCurrent}A are indistinguishable showing that the dependence of scaled vector potential~$A_{\rm sc}/J$ on the current~$J$ is negligible at low flux regimes.  Note that the term `low flux' will be quantified in the subsequent section.  

It is instructive to consider the approximation~\eqref{eq:A_leadingOrder_cxplicit} in a high flux regime which is beyond the expected region of validity of the approximation, as will be demonstrated subsequently.  In Figure~\ref{fig:generalPictureAJ_partialSC}A, we plot the scaled profiles~$A_{\rm sc}/J=A_J-1/r$ where~$A_J$ are the numerical solutions of system~\eqref{eq:system_ring_region} for~$J=3500$, as well as the profile~$B(r;r_{\rm in},r_{\rm out},\lambda)-1/r$ with appropriate (non-fitted) parameters.  In Figure~\ref{fig:generalPictureAJ_partialSC}B we plot the profile~$\psi$, and observe only partial super-conductivity near the inner rim of the cylinder.  This implies that the considered case is beyond the validity region of approximation~\eqref{eq:A_leadingOrder_cxplicit}.   Nevertheless, in Figure~\ref{fig:generalPictureAJ_partialSC}A we observe a fair agreement between the solution profile and the corresponding approximation~\eqref{eq:A_leadingOrder_cxplicit}.   
\begin{figure}[ht!]
\begin{center}
\scalebox{0.5}{\includegraphics{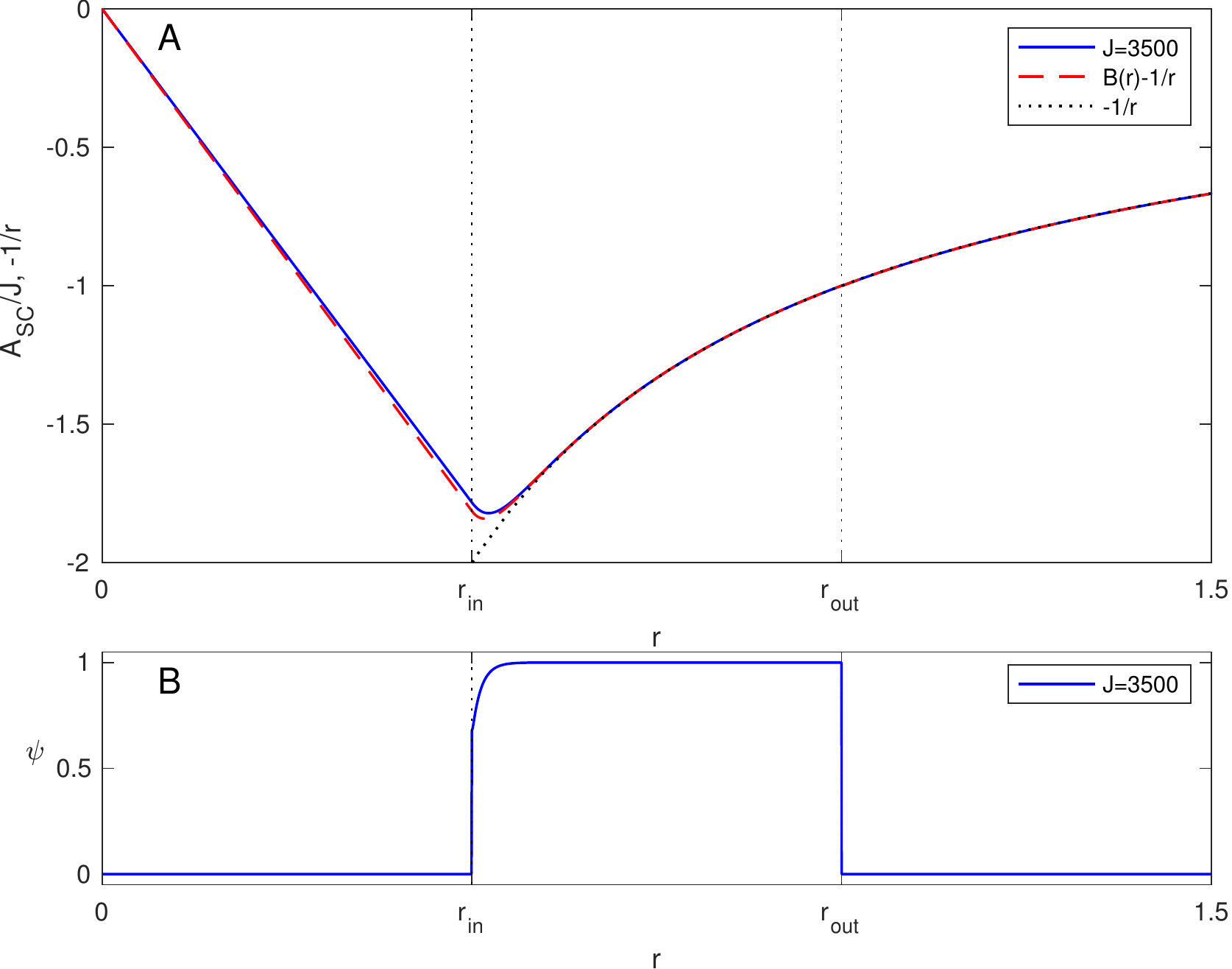}}
\caption{A: Graph of scaled vector potential~$A_{\rm sc}/J=A_J-1/r$, where~$A_J$ is the numerical solution of system~\eqref{eq:system_ring_region} for~$r_{\rm in}=0.5$,~$r_{\rm out}=1$,~$\varepsilon=0.001$, and~$\lambda=0.025$ for~$J=3500$ ({\color{blue} solid}).  Super-imposed is the profile~$B(r;r_{\rm in}=0.5,r_{\rm out}=1,\lambda=0.025)-1/r$ given by \eqref{eq:A_leadingOrder_cxplicit} ({\color{red} dashes}) and the function~$-1/r$ (dots).  B: Graph of order parameter~$\psi$ corresponding to the numerical solution of~\eqref{eq:system_ring_region} presented in A.}
\label{fig:generalPictureAJ_partialSC}
\end{center}
\end{figure}

As discussed in the presentation of the reduction from system~\eqref{eq:system} to system~\eqref{eq:system_ring_region}, the solution in the hollow cylinder region~$[r_{\rm in},r_{\rm out}]$ undergoes a transition from a linear graph~$c_1r$ to~$c_2/r$.  The analysis in Section~\ref{sec:superconductive} better characterizes the nature of this transition: In terms of the scaled vector potential~$A_{\rm sc}/J$, the solution undergoes a transition from~$c_1r$ to~$-1/r$, and this transition occurs in the narrow transition layer of~$O(\lambda)$ width at~$r=r_{\rm in}$.  The curve~$-1/r$ is super-imposed in Figure~\ref{fig:generalPictureAJ_lowCurrent}A.  As expected, one can observe that after a small transition layer the scaled  profiles well agree with the curve~$-1/r$.  From a physical point of view, this means that the superconductor generates currents in a layer of width~$\lambda$ from its inner rim. These currents produce a flux in the hole that exactly cancels the applied flux. Consequently, deep in the superconductor, the total vector potential~$A_J$ is zero, and there is no current or magnetic field.

We note that the original system~\eqref{eq:system} can be expressed in terms of numerous relevant physical quantities, e.g.,~$A_J$ with a choice of scaling or~$A_{\rm sc}$.  In this work, we have introduced the choice~\eqref{eq:AJ}.  It is instructive to refer to Figure~\ref{fig:generalPictureAJ_lowCurrent} for visual motivation and support of this choice.  Indeed, Figure~\ref{fig:generalPictureAJ_lowCurrent} strongly suggests that one should consider the scaled vector potential~$A_{\rm sc}/J$ since it is independent of~$J$ in low flux regimes.  Furthermore, since~$A_{\rm sc}/J\approx -1/r$ in most of the domain, it is preferable to study the quantity~$A_{\rm sc}/J+1/r$.  This is exactly the quantity~$A_J$, see~\eqref{eq:AJ}.

\section{High flux regime}\label{sec:highCurrentRegime}
Section~\ref{sec:superconductive} focused on the fully superconductive case for which~$\psi\approx1$ in the cylinder region.
Approximation~\eqref{eq:leadingOrderSmallJ} of~$\psi$ implies that this analysis is valid in the parameter regime, aka, the low flux regime, 
\begin{equation}\label{eq:low_current_regime}
\frac{\lambda^2\varepsilon^2J^2}{r_{\rm in}^4}\ll1\quad \mbox{or}\quad J\ll\frac{r_{\rm in}^2}{\lambda\varepsilon}.
\end{equation}
In this section, we consider the high flux regime
\begin{equation}\label{eq:J_highcurrent_regime}
J=O\left(\frac{1}{\lambda\varepsilon}\right).
\end{equation}
Similar to the analysis presented in Section~\ref{sec:superconductive}, the equation for~$\psi$, see~\eqref{eq:psi}, suggests a boundary layer of width~$\varepsilon$ at~$r=r_{\rm in}$.  The outer solution satisfies, to leading order,
\begin{equation}\label{eq:psi_outer_cq}
\psi_{\rm outer}^3(r)-(1-\varepsilon^2J^2A_J^2(r))\psi_{\rm outer}=0.
\end{equation}
This equation has the solution~$\psi_{\rm outer}=0$ and in the case~$1-\varepsilon^2J^2A_J^2(r)>0$ for~$r_{\rm in}<r<r_{\rm out}$ a second solution
\begin{equation}\label{eq:psi_outer}
\psi_{\rm outer}=\sqrt{1-\varepsilon^2J^2A_J^2(r)}.
\end{equation}
Since we expect a continuous change in the behavior as the current~$J$ is increased, and since~$\psi\approx1$ in the low current region, see Section~\ref{sec:superconductive}, we will now study the case of partial superconductivity where~$1-\varepsilon^2J^2A_J^2(r)>0$ and~$\psi_{\rm outer}$ is given by~\eqref{eq:psi_outer}.  Additional cases will be considered in subsequent sections.

\subsection{Partial superconductivity}\label{sec:partialSC}
Let us consider the case 
\begin{equation}\label{eq:inequality_AJ}
1-\varepsilon^2J^2A_J^2(r)>0
\end{equation}
 for~$r_{\rm in}<r<r_{\rm out}$, for which~$\psi_{\rm outer}$ is given, to leading order, by~\eqref{eq:psi_outer}.
Substituting~$\psi_{\rm outer}$~\eqref{eq:psi_outer} in~\eqref{eq:A} yields
\begin{equation}\label{eq:A_nonlinear}
A_J^{\prime\prime}(r)+\frac{A_J^\prime}r-\frac{A_J}{r^2}=\frac{1}{\lambda^2}\left[1-\varepsilon^2J^2 A^2_J(r)\right]A_J(r),\qquad A_J^\prime(r_{\rm in})-\frac{A_J}{r_{\rm in}}=-\frac{2}{r_{\rm in}^2},\quad A^\prime_J(r_{\rm out})+\frac{A_J}{r_{\rm out}}=0.
\end{equation}
The function~$A_J=0$ satisfies the above equation and the boundary condition at~$r=r_{\rm out}$, but does not satisfy the boundary condition at~$r=r_{\rm in}$.  This suggests a boundary layer in~$r=r_{\rm in}$ with an outer solution~$A_J=0$.
To study the vector potential~$A_J$ in the boundary layer region, let us consider the scaled current
\begin{equation}\label{eq:Js}
J_s=\varepsilon \lambda J,
\end{equation}
and the scaled quantities in the boundary layer regime of~$A_J$ for the inner solution
\begin{equation}\label{eq:As}
A_{\rm in}=\frac{A_J}{\lambda},\quad \rho=\frac{r-r_{\rm in}}{\lambda},
\end{equation}
where the scaling of~$A_J$ is since~$1-\varepsilon^2J^2A_J^2(r)>0$ and~\eqref{eq:J_highcurrent_regime} imply that~$A_J=O(\lambda)$.

Substituting~\eqref{eq:Js} and~\eqref{eq:As} in~\eqref{eq:A_nonlinear}, and using the Prandtl matching condition for the outer solution~$A_{\rm in}^{\rm outer}=0$, gives rise to the equation for the scaled vector potential~$A_{\rm in}$
\begin{equation}\label{eq:As_full}
A_{\rm in}^{\prime\prime}(\rho)+\lambda\frac{A_{\rm in}^\prime}{r_{\rm in}+\lambda\rho}-\lambda^2\frac{A_{\rm in}}{(r_{\rm in}+\lambda\rho)^2}=A_{\rm in}-J_s^2 A_{\rm in}^3,\qquad A_{\rm in}^\prime(0)=\lambda\frac{A_{\rm in}(0)}{r_{\rm in}}-\frac{2}{r_{\rm in}^2},\quad A_{\rm in}(\infty)=0.
\end{equation}
Equation~\eqref{eq:As_full} is a nonlinear equation and, to the best of our knowledge, does not have an explicit analytic solution.  This is in contrast to the low current case~\eqref{eq:low_current_regime} for which the corresponding equation~\eqref{eq:A_linear} is linear and can be solved explicitly.  Let us seek for an solution of~\eqref{eq:As_full} for~$\lambda\ll1$ in the form
\begin{equation}\label{eq:As_expansion}
A_{\rm in}(\rho)=A_0(\rho)+\lambda A_1(\rho)+O(\lambda^2),
\end{equation}
Substituting~\eqref{eq:As_expansion} in~\eqref{eq:As_full} and equating the~$O(1)$ and~$O(\lambda)$ terms yields 
\begin{equation}\label{eq:As_leadingOrder}
A_0^{\prime\prime}(\rho)-A_0=-J_s^2A_0^3,\qquad A_0^\prime(0)=-\frac{2}{r_{\rm in}^2},\quad A_0(\infty)=0,
\end{equation}
and
\begin{equation}\label{eq:As_firstOrder}
A_1^{\prime\prime}(\rho)-A_1=-\frac{A_0^\prime}{r_{\rm in}}-3J_s^2A_0^2A_1,\qquad A_1^\prime(0)=\frac1{r_{\rm in}}A_0(0),\quad A_1(\infty)=0,
\end{equation}
respectively.

We now solve~\eqref{eq:As_leadingOrder} for the leading order solution~$A_0$.  Multiplying both hands of~\eqref{eq:As_leadingOrder} by~$A_0^\prime$ and integrating in~$\rho$ while using~$A_0(\infty)=0$ yields
\begin{equation}\label{eq:nonlinear_A}
[A_0^\prime(\rho)]^2=A_0^2-\frac{J_s^2}{2}A_0^4.
\end{equation}
Lemma~\ref{lem:AJ} implies that the solution is positive and monotonically decreasing.  Therefore, we consider the  branch
\[
A_0^\prime(\rho)=-\sqrt{A_0^2-\frac{J_s^2}{2}A_0^4}.
\]
The inverse function~$\rho(A_0)$ satisfies
\[
\rho^\prime(A_0)=-\frac{1}{\sqrt{A_0^2-\frac{J_s^2}{2}A_0^4}}.
\]
Integration of~$\rho^\prime(A_0)$ while using~$\rho(A_0(0))=0$ yields
\begin{equation}\label{eq:rho_A0}
\rho(A_0)=\frac{1}2\ln\left[\alpha\frac{1+\sqrt{1-\frac{J_s^2}{2}A_0^2}}{1-\sqrt{1-\frac{J_s^2}{2}A_0^2}}\right],\quad \alpha=\frac{1-\sqrt{1-\frac{J_s^2}{2}A_0^2(0)}}{1+\sqrt{1-\frac{J_s^2}{2}A_0^2(0)}}.
\end{equation}
The constant~$\alpha$ depends on the unknown value~$A_0(0)$.  To resolve~$\alpha$, we substitute the boundary condition for~$A_0$ at~$\rho=0$ in~$\eqref{eq:nonlinear_A}$ which implies that~$A_0^2(0)$ equals one of two values~$c_\pm$
\begin{equation}\label{eq:A0_pm}
A_0^2(0)=c_\pm=\frac{1 \pm \sqrt{1 -8\frac{J_s^2}{r_{\rm in}^4}}}{J_s^2}.
\end{equation}

However, according to \eqref{eq:Js} and~\eqref{eq:A0_pm},~$1-J_s^2 c_+<0$ which contrasts with~\eqref{eq:inequality_AJ}.  Hence,
\begin{equation}\label{eq:A20}
A_0^2(0)=\frac{1 - \sqrt{1 -8\frac{J_s^2}{r_{\rm in}^4}}}{J_s^2}.
\end{equation}
Isolating~$A_0$ in~\eqref{eq:rho_A0} and substituting~\eqref{eq:A20} yields
\begin{subequations}\label{eq:approxAs}
\begin{equation}\label{eq:A0_result}
A_0(\rho)=\frac{\sqrt{8\alpha}}{J_s}\frac{e^{-\rho}}{1+\alpha\,e^{-2\rho}},\quad \alpha=\frac{1-\sqrt{\frac12\left[1+\sqrt{1 -8\frac{J_s^2}{r_{\rm in}^4}}\right]}}{1+\sqrt{\frac12\left[1+\sqrt{1 -8\frac{J_s^2}{r_{\rm in}^4}}\right]}}.
\end{equation}
Substituting~$A_0(\rho)$ in~\eqref{eq:As_firstOrder} for~$A_1(\rho)$, and solving it yields
\begin{equation}\label{eq:A1_result}
A_1=\frac{\sqrt{2\alpha}}{3\,J_s\,r_{\rm in}\, (1+\alpha\,e^{-2\rho})^2 }\left[9 e^{-\rho}+\frac {\left( {\alpha}^{3}+11\,{\alpha}^{2}-69\,\alpha-15
 \right)(e^{-\rho}-\alpha e^{-3\rho}) }{\left( {\alpha}^{2}-6\,\alpha+1
 \right) }-6(e^{-\rho}-\alpha e^{-3\rho})(2+\rho) -\alpha^2 e^{-5\rho}
\right].
\end{equation}
\end{subequations}
We now consider the approximation error.  The reduction from~\eqref{eq:A} to~\eqref{eq:As_full} relied on the approximation of~$\psi$ by its outer solution~$\psi_{\rm outer}$.  Namely, neglecting the possible contribution to the vector potential~$A_J$ due to a boundary layer of~$\psi$ at~$r=r_{\rm in}$.  Then, the outer solution~$\psi_{\rm outer}$ is approximated to leading order, see~\eqref{eq:psi_outer}.
To quantify the errors involved in the above approximation, let us compute~$\psi$ more accurately.
As in the analysis of the low current case, see Section~\ref{sec:superconductive}, we consider the ansatz
\[
\psi(\rho,\rho_\varepsilon)=\sqrt{1-J_s^2A_{\rm in}^2(\rho)}+f(\rho_\varepsilon),\quad \rho_\varepsilon=\frac{r-r_{\rm in}}{\varepsilon},\quad |f|\ll 1.
\]
Substituting~$\psi(\rho,\rho_\varepsilon)$ into~\eqref{eq:psi} and solving the equation for~$f$ yields, see details in Section~\ref{sec:superconductive} and particularly before equation~\eqref{eq:psi_inner_ansatz_f}:
\begin{equation}\label{eq:psi_including_inner}
\psi=\sqrt{1-J_s^2A_{\rm in}^2(\rho)}+\frac{\varepsilon}{\lambda}\beta \exp\left[-\sqrt2\sqrt{1-J_s^2A_{\rm in}^2(0)}\rho_\varepsilon\right]+O\left(\frac{\varepsilon^2}{\lambda^2}\right), \quad \beta=-\frac{\sqrt2J_s^2}{2}\frac{A_{\rm in}^\prime(0)A_{\rm in}(0)}{1-J_s^2A_{\rm in}^2(0)},
\end{equation}
where~$A_{\rm in}(\rho)$ is the solution of equation~\eqref{eq:As_full}.   Finally, we note that one can readily substitute approximation~\eqref{eq:approxAs} of~$A_{\rm in}$ in~\eqref{eq:psi_including_inner} to obtain an approximation of~$\psi$ that depends only on the problem parameters.  

\begin{figure}[ht!]
\begin{center}
\scalebox{0.5}{\includegraphics{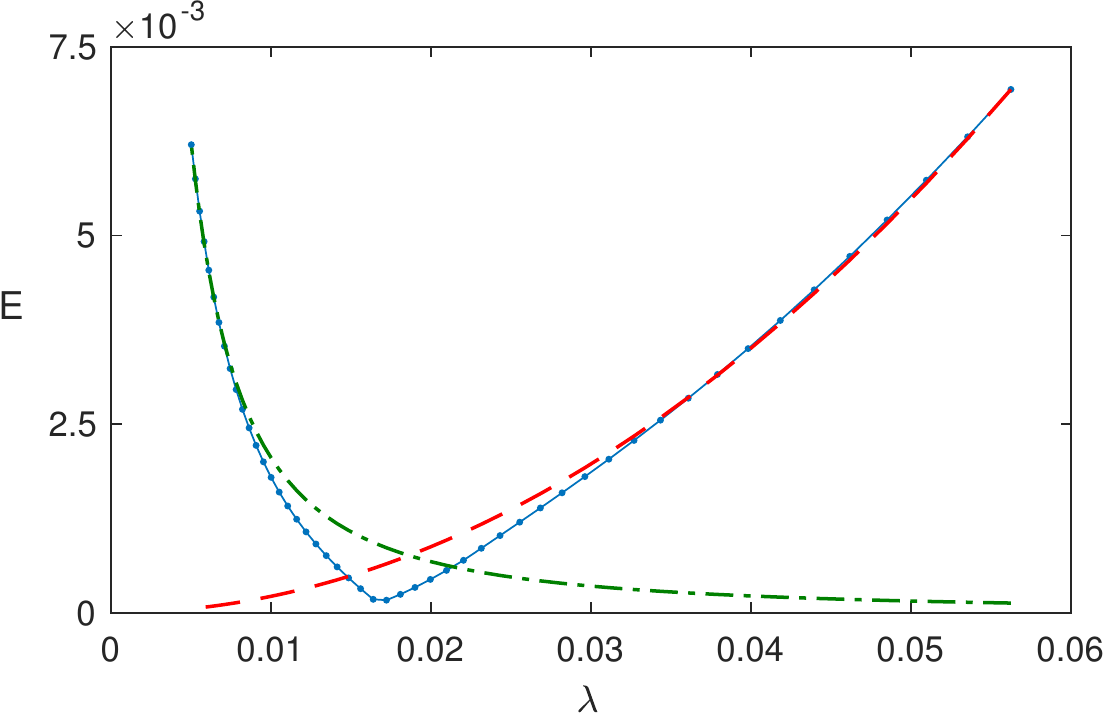}}
\caption{Error~$E=\max|A_0(\rho)+\lambda A_1(\rho)-A_{\rm in}^{\rm numerical}(\rho;\lambda)|$ where~$A_{\rm in}^{\rm numerical}(\lambda)=A_J/\lambda$ and~$A_J$ is the numerical solution of~\eqref{eq:A} with~$r_{\rm in}=1.5$,~$r_{\rm out}=2$,~$\varepsilon=0.001$ and~$J=0.75 r_{\rm in}^2/(\sqrt8\lambda\varepsilon)$, and~$A_0,\,A_1$ are given by~\eqref{eq:approxAs} with the same parameters ({\color{blue} solid curve with `$\cdot$' markers at the data points}).  Super-imposed are the curve~$c_1\lambda^2$ where~$c_1\approx 2.19$ ({\color{red} dashes}), and~$c_2(\varepsilon/\lambda)^{1.6}$ where~$c_2\approx0.082$ ({\color{greenCurve} dash-dots}).}
\label{fig:errorAnalysis_HighCurrent}
\end{center}
\end{figure}
Result~\eqref{eq:psi_including_inner} shows how the width of the internal boundary layer of~$\psi$ depends on the current.  
This result also reveals the overall error introduced in the reduction from~\eqref{eq:system_ring_region} to~\eqref{eq:As_leadingOrder}.  Indeed, an~$O(\varepsilon/\lambda)$ error is introduced in the reduction from equation~\eqref{eq:A} to equation~\eqref{eq:As_full}.  An additional error of~$O(\lambda^2)$ is introduced by considering the solution form~\eqref{eq:As_expansion}.
Overall, the reduction from~\eqref{eq:system_ring_region} to~\eqref{eq:As_leadingOrder} introduced an error of~$O(\varepsilon/\lambda,\lambda^2)$.  
Therefore, one can expect an error of~$O(\varepsilon/\lambda,\lambda^2)$ in the approximation~$A_{\rm in}\approx A_0+\lambda A_1$.
A numerical verification of this result is presented in Figure~\ref{fig:errorAnalysis_HighCurrent}.  As expected, we observe that the error
\begin{equation}\label{eq:err_As}
E(\lambda)=\max|A_0(\rho)+\lambda A_1(\rho)-A_{\rm in}^{\rm numerical}(\rho;\lambda)|
\end{equation}
where~$A_{\rm in}^{\rm numerical}$ is computed numerically behaves as~$\lambda^2$ when~$\lambda^2\gg \varepsilon/\lambda$.  For smaller~$\lambda$, we observe that as expected the error decreases with~$\lambda$.  Particularly, we observe that the error is smaller than expected and behaves as~$(\varepsilon/\lambda)^c$ where~$c\approx 1.6$.  We also observe that in the region where both error terms are comparable, they cancel each other and further reduce the error, see graph in the region of~$\lambda\approx0.02$ in~Figure~\ref{fig:errorAnalysis_HighCurrent}.

Relation~\eqref{eq:A20} and the results~(\ref{eq:approxAs},\ref{eq:psi_including_inner}) imply that the above analysis is valid in the parameter regime~\eqref{eq:J_highcurrent_regime} and
\begin{equation}\label{eq:Js_bound}
J_s< \frac{r_{\rm in}^2}{\sqrt8}\quad \mbox{or}\quad J<\frac1{\varepsilon\lambda}\frac{r_{\rm in}^2}{\sqrt8}.
\end{equation}
Approximation~\eqref{eq:As_expansion} breaks down as~$J_s$ approaches this bound.  Indeed, in the limit~$J_s\longrightarrow r_{\rm in}^2/\sqrt8$, the denominator of~$A_1$ vanishes since~$\alpha^2-6\alpha+1\longrightarrow0$.  Consequently,~$\lambda A_1$ becomes dominant over~$A_0$, and the asymptotic expansion breaks down.  
Similarly, the correction term in~$\psi$ blows up in this limit.  Since~$r_{\rm in}^2/\sqrt8\approx 0.36 r_{\rm in}^2$, it is reasonable, in certain cases, to consider the region~$J_s\ll r_{\rm in}^2$.  In this case,~$A_0$, see~\eqref{eq:A0_result}, is approximated by
\begin{equation}\label{eq:A0_approx_small_Js}
A_0(\rho)=\frac{2}{r_{\rm in}^2}e^{-\rho}\left(1+\frac{3}{2}\delta^2- \frac12e^{-3\rho}\delta^2\right)+O(\delta^4),\quad \delta=\frac{J_s}{r_{\rm in}^2}.
\end{equation}
Particularly, the scaled solution~$\lambda\,A_{\rm in}$ agrees, up to~$O(\delta^2)$, with the profile~$B$~\eqref{eq:A_asympt} which arises in the low current case.  Figure~\ref{fig:generalPictureAJ_partialSC} demonstrates this point by comparing the numerical solution of~\eqref{eq:system_ring_region} for~$J_s=0.875$ where~$\frac{r_{\rm in}^2}{\sqrt8}=0.883$ (solid blue curve) with the profile~$B$ (dashed red curve).  As expected, we observe that in the region~$r>r_{\rm in}$ the two curves are indistinguishable except in a small region near~$r=r_{\rm in}$.

We next consider the case where the current is yet higher,~$J_s>\frac{r_{\rm in}^2}{\sqrt8}$, and~$1-\varepsilon^2 J^2A_J^2(r_{\rm in})<0$.

\subsection{Superconductivity destroyed in part of ring}\label{sec:SC_partially_destroyed}
Sections~\ref{sec:superconductive} and~\ref{sec:partialSC} considered the case of full or partial superconductivity.  Namely, the case in which the superconducting order parameter~$\psi$ remains strictly positive in any limit~$(\varepsilon,\lambda)\to0$ for which~$\varepsilon\ll\lambda$.  The analysis in these sections shows that full or partial superconductivity occurs for currents~$J$ below the treshold~\eqref{eq:Js_bound} for which~$1-\varepsilon^2J^2A_J^2(r)>0$ for~$r\in[r_{\rm in},r_{\rm out}]$.  
We now consider a case of higher currents, so that there exists a turning point~$r_{\rm in}<r_{\rm turning}<r_{\rm out}$ for which 
\begin{equation}\label{eq:def_r_transition}
\varepsilon J A_J(r_{\rm turning})=1.
\end{equation}
By Lemma~\ref{lem:AJ},~$A_J$ is strictly monotonically decreasing, hence 
\[
\begin{cases}
1-\varepsilon^2 J^2 A^2_J(r)<0,& r_{\rm in}\le r<r_{\rm turning},\\
1-\varepsilon^2 J^2 A^2_J(r)>0,& r_{\rm turning}\le r<r_{\rm out}.
\end{cases}
\]
This implies that, to leading order, see~\eqref{eq:psi_outer_cq},
\[
\psi=
\begin{cases}
0& r<r_{\rm turning},\\
\sqrt{1-\varepsilon^2J^2A_J^2}& r>r_{\rm turning}.
\end{cases}
\]
Namely, superconductivity is destroyed for~$r<r_{\rm turning}$, and the effective superconductive hollow cylinder is within the region~$r_{\rm turning}<r<r_{\rm out}$.

Let us first focus on the 
behavior of the solution near the turning point~$r_{\rm turning}$.  Note that, in what follows, we will show that the scaling~$A_J=O(\lambda)$ is not applicable for all~$r\in[r_{\rm in},r_{\rm out}]$.  Therefore, we do not use the scaling~\eqref{eq:As} as in the previous section, but rather consider~$A_J$.
A Taylor series expansion of~$A_J(r)$ about~$r=r_{\rm turning}$ yields for
\begin{equation}\label{eq:AJ_near_turning_point}
\begin{split}
\varepsilon^2 J^2A_J^2(r)&=\varepsilon^2 J^2A_J^2(r_{\rm turning})+2\varepsilon^2 J^2A_J(r_{\rm turning})A_J^\prime(r_{\rm turning})(r-r_{\rm turning})+O\left((r-r_{\rm turning})^2\right)\\&=1-\alpha_0\rhot+O(\lambda^2\rhot^2),\qquad \alpha_0=-2J_sA_J^\prime(r_{\rm turning}),\qquad \rhot=\frac{r-r_{\rm turning}}{\lambda},
\end{split}
\end{equation}
where the last equality is due to~\eqref{eq:Js} and~\eqref{eq:def_r_transition}.
Substituting~\eqref{eq:AJ_near_turning_point} in equation~\eqref{eq:psi} for~$\psi$, using the scaled variable~$\rho$ and neglecting~$O(\rhot^2,\varepsilon/\lambda)$ terms yields,
\begin{equation}\label{eq:nonlinearTurningPoint}
\frac{\varepsilon^2}{\lambda^2}\frac{d^2\psi}{d\rhot^2}
=\psi^3-\alpha_0\rhot\,\psi.
\end{equation}
This equation has the form of a Painlev\'{e} II equation.  Particularly, close enough to~$\rho=0$,~$\psi^3$ becomes dominant over~$\rho \psi$, and therefore it is not possible to neglect~$\psi^3$.  
Let
\begin{equation}\label{eq:psi_nu_relation}
\psi(\rho)=\sqrt2 \left[\frac{\alpha_0\,\varepsilon}{\lambda}\right]^{\frac13}\, \nu(y),\quad y=\left[{\frac{\alpha_0\lambda^2}{\varepsilon^2}}\right]^{\frac13}\rho.
\end{equation}
Then,~$\nu$ satisfies
\[
\nu^{\prime\prime}(y)+y \nu-2\nu^3=0.
\]
This equation admits multiple solutions which satisfy~$\lim_{y\to-\infty}\nu(y)=0$, but only one monotone solution which is the Hastings--McLeod solution~$\nu_0$ with the asymptotic behavior~\cite{hastings1980boundary}
\[
\nu_0=\begin{cases}
\sqrt {y/2}\left[1+\frac1{8y^3}+O\left(y^{-6}\right)\right],& y\gg1,\\
 \frac{1}{2\sqrt\pi}|y|^{-1/4}\exp\left(-\frac13|y|^{\frac32}\right),& y\ll -1.
\end{cases}
\]
The asymptotic behavior of the corresponding solution~$\psi$, see~\eqref{eq:psi_nu_relation}, takes the form
\[
\psi=\begin{cases}
\sqrt{\alpha_0\,\rho}
,& \left[\frac{\varepsilon^2}{\alpha_0\lambda^2}\right]^{\frac13}\ll \rho\ll1,\\
\sqrt{\frac{2}{\pi}}\sqrt{\frac{\varepsilon}{\lambda}}\left|\frac{\alpha_0}{\rho}\right|^{\frac14}\exp\left(-\frac{\sqrt2\,\alpha_0^{\frac56}}3\left|\frac{\lambda}{\varepsilon}\right|^{\frac23}\rho^{\frac32}\right),& \rho\ll-\left[\frac{\varepsilon^2}{\alpha_0\lambda^2}\right]^{\frac13}.\end{cases}
\]
The intermediate region~$\left[\frac{\varepsilon^2}{\alpha_0\lambda^2}\right]^{\frac13}\ll \rho\ll1$ is also the matching region between the Hastings-McLeod solution and the leading order approximation of~$\psi$.  Indeed, in this region, see~\eqref{eq:AJ_near_turning_point},
\[
\psi=\sqrt{1-\varepsilon^2 J^2A_J^2(r)}\approx \sqrt{\alpha_0\,\rho}.
\]
Overall,
\begin{equation}\label{eq:g_approximation}
\psi^{\rm approx}\approx\begin{cases}
\sqrt2 \left[\frac{\alpha_0\,\varepsilon}{\lambda}\right]^{\frac13} \nu\left(\sqrt[3]{\frac{\alpha_0\lambda^2}{\varepsilon^2}}\rho\right), &  \rho \ll 1,\\
\sqrt{1-\varepsilon^2 J^2A_J^2(r)},& \left[\frac{\varepsilon^2}{\alpha_0\lambda^2}\right]^{\frac13}\ll \rho.
\end{cases}
\end{equation}
Figure~\ref{fig:psiProfile} presents the numerical solution~$\psi$ of~\eqref{eq:psi} near the turning point, see solid blue curve, as well as the Hastings-McLeod solution~$\nu$ scaled according to~\eqref{eq:g_approximation} where~$\alpha_0$ is computed numerically, see red dashed curve, and the curve~$\sqrt{1-\varepsilon^2 J^2A_J^2(r)}$, see green dash-dotted curve.  As expected, for small~$\rho$ the numerical solution~$\psi$ agrees with the Hastings-McLeod solution, while for larger positive~$\rho$ it agrees with the function~$\sqrt{1-\varepsilon^2 J^2A_J^2(r)}$.  The matching region in which the solution agrees with both the above approximations is a small region~$\rho>\rho_{\rm match}$ where~$\rho_{\rm match}=\left[\frac{\varepsilon^2}{\alpha_0\lambda^2}\right]^{\frac13}$. 

\begin{figure}[ht!]
\begin{center}
\scalebox{0.75}{\includegraphics{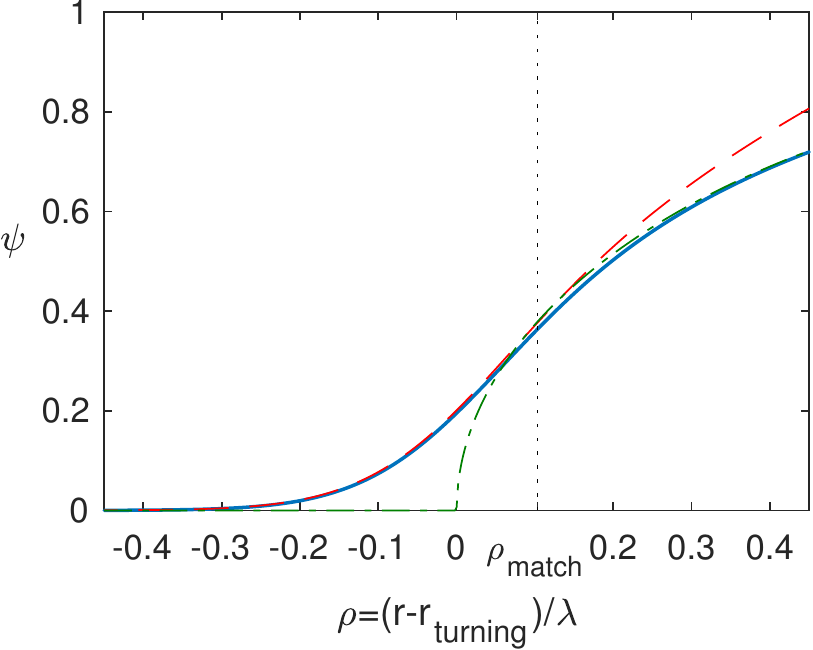}}
\caption{Solution~$\psi$ of~\eqref{eq:psi} ({\color{blue} solid}) 
for~$r_{\rm in}=0.5$,~$r_{\rm out}=1$,~$\varepsilon=0.001$,~$\lambda=0.025$ and~$J=8703.4$ ({\color{blue} solid}).  Super-imposed is the Hastings-McLeod solution~$\nu$ scaled according to~\eqref{eq:g_approximation} where~$\alpha_0$ is computed numerically ({\color{red} dashes}), the function~$\sqrt{1-J_sA_{\rm in}^2}$ in the region~$1-J_sA_{\rm in}^2\ge0$ ({\color{greenCurve} dash-dots}), and the curve~$\rho_{\rm match}=\left[\frac{\varepsilon^2}{\alpha_0\lambda^2}\right]^{\frac13}$ (dots).}
\label{fig:psiProfile}
\end{center}
\end{figure}

The result~\eqref{eq:g_approximation} shows that~$\psi$ undergoes a transition from~$0$ to~$\sqrt{1-\varepsilon^2 J^2A_J^2(r)}$ in a narrow transition layer whose width scales as~$\lambda^{1/3}\varepsilon^{2/3}$.  This result, however, does not reveal the location of the transition layer, i.e., the value of~$r_{\rm turning}$.
A classic way to determine~$r_{\rm turning}$ is to fully address the underlying turning point problem - namely approximate the solution in the transition layer, while matching it to the solutions in the left and right domains, see, e.g.,~\cite{karasev2003global,bender2013advanced}.  This is a demanding asymptotic analysis problem for several reasons: The location of the turning point~$r_{\rm turning}$ is not apriori known, the turning point problem in~$\psi$ is nonlinear, see~\eqref{eq:nonlinearTurningPoint}, and it involves a system where one function,~$A_J$, decays exponentially towards the boundary at~$r_{\rm out}$, while the second function,~$\psi$, decays exponentially in the opposite direction, towards the boundary at~$r_{\rm in}$.  The latter implies that the choice of the matching direction is non-trivial.  In this paper, we apply a much simpler approach and use a variational approximation.  Particularly, we exploit the fact that the relevant solution of~\eqref{eq:system} is a (local) minimizer of the corresponding free energy functional~\eqref{eq:energyAsc}, and find~$r_{\rm turning}$ for which an appropriate approximation of the solution will have minimal energy.  As will be shown, roughly speaking, the solution in the transition layer does not contribute much to the overall energy.  Therefore, a key advantage of this approach is that it does not require resolving to high accuracy the solution in the transition layer.

To apply a variational approximation, let us consider a solution of the form~\eqref{eq:g_approximation} in the limit of zero width of a transition layer,
\begin{subequations}\label{eq:variationalApprox}
\begin{equation}\label{eq:psiApprox}
\psi=\begin{cases}
0,& r<r_{\rm eff},\quad r>r_{\rm out},\\
\sqrt{1-\varepsilon^2 J^2A_J^2},& r_{\rm eff}<r<r_{\rm out},\\
\end{cases}
\end{equation}
where~$r_{\rm eff}$ is the, apriori unknown, effective location of the transition layer.  Note that while~$r_{\rm eff}$ is closely related to the turning point~$r_{\rm turning}$ defined by~\eqref{eq:def_r_transition}, it is not necessarily equal to it.

In the region~$r<r_{\rm eff}$, Equation~\eqref{eq:A} is homogenous and has an explicit solution of the form~$A_J=c r+1/r$ which satisfies
\begin{equation}\label{eq:BC_at_reff}
A^\prime_J(r_{\rm eff})-\frac{A_J}{r_{\rm eff}}=-\frac2{r_{\rm eff}^2}.
\end{equation}
In the region~$r>r_{\rm eff}$,~$A_J$ satisfies, to leading order, equation~\eqref{eq:A_nonlinear} for~$r_{\rm eff}<r<r_{\rm out}$.  This implies that one may use approximation~\eqref{eq:approxAs} of~$A_J$ in this region as a basis for the variational approximation of~$r_{\rm eff}$.  The resulting expression, however, involve an implicit equation for~$r_{\rm eff}$ which includes integrals that cannot be explicitly resolved.  Note, however, that outside the transition layer, approximation~\eqref{eq:approxAs} is, to leading order, of the form~$c\,e^{-\rho}$, see~\eqref{eq:A0_approx_small_Js}.  This form coincides with the solution form in the low flux regime, see~\eqref{eq:leadingOrderSmallJ}.  
Motivated by this fact, we choose~$r_{\rm eff}$
so that
\[
A_J(r)=\frac{2\lambda}{r_{\rm eff}^2}e^{-\rho},\quad \rho=\frac{r-r_{\rm eff}}{\lambda}.
\]
Namely, we seek for an effective superconductive inner ring radius,~$r_{\rm eff}$, so that for~$r>r_{\rm eff}$ the solution agrees, to leading order, with the scaled solution~$A_J$ in the low flux regime~\eqref{eq:low_current_regime}.  The resulting ansatz takes the form
\begin{equation}\label{eq:AJapprox}
A_J^{\rm approx}(r;r_{\rm eff})= \begin{cases}
\frac1r-c\,r,& r\le r_{\rm eff}\\
\frac{2\lambda}{r_{\rm eff}^2}e^{-\rho} ,& r>r_{\rm eff}\\
\end{cases},\quad c=\frac{r_{\rm eff}-2\lambda}{r_{\rm eff}^3},
\end{equation}
\end{subequations}
where~$c$ is chosen to assure continuity at~$r=r_{\rm eff}$.  

Substituting the approximate quantities~\eqref{eq:variationalApprox} into the corresponding free energy~\eqref{eq:energyAsc} yields
\begin{equation*}
\begin{split}
\Energy(r_{\rm eff})=&\frac{2(r_{\rm eff} - 2\lambda)^2J_s^2}{r_{\rm eff}^4}+\frac{\lambda J_s^2}{r_{\rm eff}^4}\left[2 r_{\rm eff}+4 \lambda e^{\frac{2r_{\rm eff}}{\lambda}}E_1\left(\frac{2r_{\rm eff}}{\lambda}\right)- 3\lambda\right]+\frac{2\lambda J_s^2}{r_{\rm eff}^3}+\\& - \frac{J_s^2(4r_{\rm eff} - 1)\lambda^2}{r_{\rm eff}^4}+\frac{r_{\rm eff}^2}4+\frac{2J_s^4}{r_{\rm eff}^7} \lambda+\frac{J_s^4-16 J_s^4 r_{\rm eff}}{2r_{\rm eff}^8} \lambda^2+O(\lambda^3,e^{\frac{r_{\rm eff}-r_{\rm out}}{\lambda}},\varepsilon^2),
\end{split}
\end{equation*}
where~$E_1(x)=\int_1^\infty \frac1te^{-t x}dt$ is the exponential integral.
Using first order condition~$\Energy^\prime(r_{\rm eff})=0$ to compute the value of~$r_{\rm eff}$ for which the energy is minimal yields
\begin{equation}\label{eq:reff_approx}
r_{\rm eff}=\sqrt[4]{8}\sqrt{J_s}-\frac{55}{64}\lambda+O(\lambda^2).
\end{equation}
The value~$r_{\rm eff}$ is the effective inner ring radius of the superconductive ring in the sense that for~$r<r_{\rm eff}$, the order parameter~$\psi\approx0$.

The effective inner cylinder radius must reside in the cylinder region,~$r_{\rm in}\le r_{\rm eff}\le r_{\rm out}$.
Substituting~\eqref{eq:reff_approx} in the above bound for~$r_{\rm eff}$ and isolating~$J_s$ yields, to leading order, the flux regime 
\begin{equation}\label{eq:highCurrentRegime_destory_SC}
\frac{r_{\rm in}^2}{\sqrt8}<J_s<\frac{r_{\rm out}^2}{\sqrt8}.
\end{equation}
This flux regime complements the regime~\eqref{eq:Js_bound} in which the whole cylinder region is fully or partially superconductive. 

\begin{figure}[ht!]
\begin{center}
\scalebox{0.5}{\includegraphics{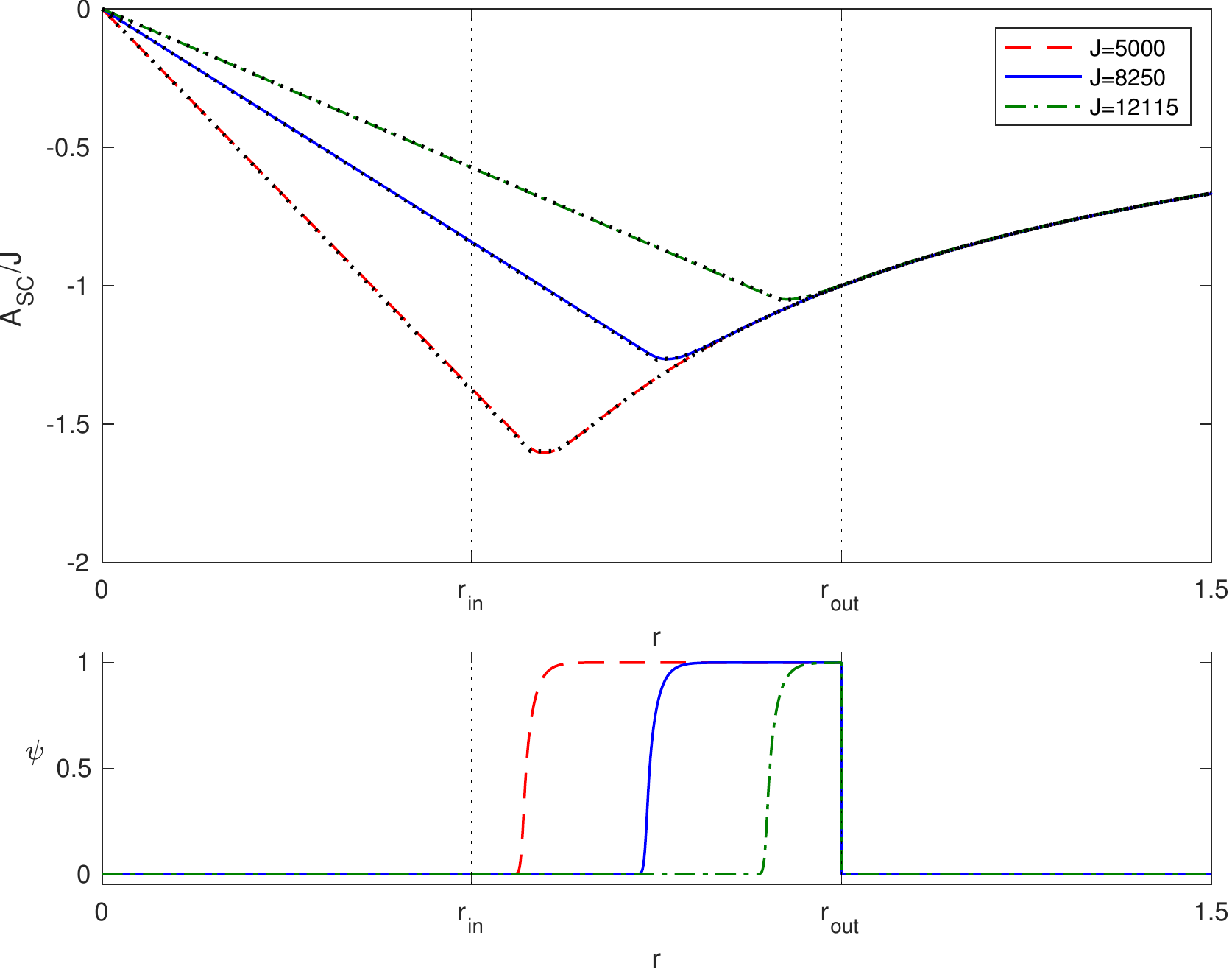}}
\caption{A: Graph of scaled vector potential~$A_{\rm sc}/J=A_J-1/r$, where~$A_J$ is the numerical solution of system~\eqref{eq:system_ring_region} for~$r_{\rm in}=0.5$,~$r_{\rm out}=1$,~$\varepsilon=0.001$, and~$\lambda=0.025$ for~$J=5000$ ({\color{red} dashes}),~$J=8250$ ({\color{blue} solid}), and~$J=12115$ ({\color{greenCurve} dash-dots}).  For each curve, super-imposed is the corresponding approximation~(\ref{eq:AJapprox},\ref{eq:reff_approx}) (dotted curves).  Each pair of curves are indistinguishable.  B: Graph of order parameter~$\psi$ corresponding to the three solutions of~\eqref{eq:system_ring_region} presented in A.}
\label{fig:generalPicture_partialSC}
\end{center}
\end{figure}The above analysis and its result~\eqref{eq:reff_approx} rely on the ansatz~\eqref{eq:variationalApprox}.    
Figure~\ref{fig:generalPicture_partialSC} presents several solutions of~\eqref{eq:system_ring_region} in the flux regime~\eqref{eq:highCurrentRegime_destory_SC}, along with their corresponding approximations~\eqref{eq:reff_approx}.  As expected, the effective inner cylinder radius increases with the current such that superconductivity is destroyed in the inner part of the cylinder, compare with Figure~\ref{fig:generalPictureAJ_lowCurrent} in the low flux regime.  In all examples, the solutions and their approximations are indistinguishable, implying that ansatz~\eqref{eq:variationalApprox} successfully describes the solution behavior, at least to leading order.  To better quantify the accuracy of approximation~\eqref{eq:reff_approx}, in Figure~\ref{fig:error_reff} we present the approximation error~$E=|r_{\rm eff}-r_{\rm eff}^{\rm numerical}|$ where~$r_{\rm eff}^{\rm numerical}$ is defined as the point in which  
\begin{equation}\label{eq:def_reff_numerical}
A_J(r_{\rm eff}^{\rm numerical})=\frac{2\lambda}{[r_{\rm eff}^{\rm numerical}]^2}.
\end{equation}
\begin{figure}[ht!]
\begin{center}
\scalebox{0.5}{\includegraphics{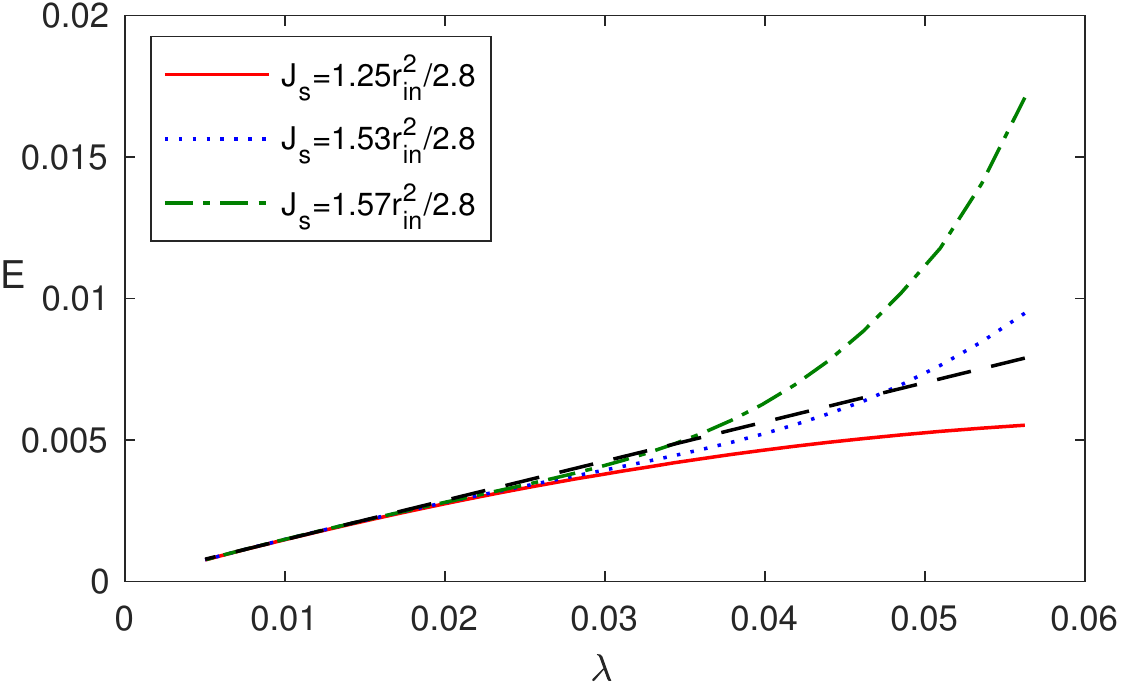}}
\caption{Error~$E=|r_{\rm eff}-r_{\rm eff}^{\rm numerical}|$ where~$r_{\rm eff}^{\rm numerical}$ is defined by~\eqref{eq:def_reff_numerical} where~$A_J$ is the numerical solution of system~\eqref{eq:system_ring_region} for~$r_{\rm in}=1.5$,~$r_{\rm out}=2$,~$\varepsilon=0.001$ and scaled current values~$J_s=1.25r_{\rm in}^2/\sqrt8$ ({\color{red} solid}),~$J_s=1.53r_{\rm in}^2/\sqrt8$ ({\color{blue} dots}) and~$J_s=1.57r_{\rm in}^2/\sqrt8$ ({\color{greenCurve} dash-dots}). Superimposed is the fitted curve~$a+\lambda b$ where~$a\approx 9.66\cdot 10^{-5}$ and~$b\approx 0.138$ (dashes).}
\label{fig:error_reff}
\end{center}
\end{figure}
We observe that~$E$ decays linearly with~$\lambda$.  We observe that the source of this error is the~$O(\lambda)$ accuracy of the derivative of the ansatz~\eqref{eq:AJapprox} \[
A_J^{\rm approx}(r;r_{\rm eff})-A_J(r;r_{\rm eff})=O(\lambda^2),\quad \frac{d}{dr}\left[A_J^{\rm approx}(r;r_{\rm eff})-A_J(r;r_{\rm eff})\right]=O(\lambda).
\]

\section{The weakly superconductive case~$\psi\ll1$}\label{sec:weakSC}
Section~\ref{sec:highCurrentRegime} considered the high flux regime, up to~$J\approx \frac{r_{\rm out}^2}{\sqrt8\varepsilon\lambda}$.  In this high flux regime, we have shown that the superconducting state order parameter~$\psi$ decreases with~$J$.  Particularly, as current increases, superconductivity is gradually reduced and eventually destroyed in increasing parts of the ring.   Based on this emerging picture, it is reasonable to assume that upon further increase of the current~$J$, superconductivity will gradually decrease until it is completely destroyed.  Indeed, it is well known that superconductivity is destroyed at sufficiently large currents~\cite{tinkham2004introduction,schrieffer2018theory}.  In this section, we focus on the regime of weak superconductivity~$\psi\ll1$.
Significantly, we do not make any assumptions on the magnitude of the current~$J$.  Rather, we aim to obtain from the analysis an approximation of the current~$J$ at which superconductivity is destroyed everywhere in the cylinder.  

Let us consider a solution of~\eqref{eq:system_ring_region} for which~$\psi\ll1$.   The equation form of~\eqref{eq:psi} and the analysis conducted in the previous section, strongly suggest that outside a layer of~$O(\varepsilon)$ from the boundaries and to leading order,
\begin{equation}\label{eq:psi_leadingOrder}
\psi=\begin{cases}
0,& 1-\varepsilon^2 J^2A_J^2<0,\\
\sqrt{1-\varepsilon^2 J^2A_J^2},& 1-\varepsilon^2 J^2A_J^2>0.\\
\end{cases}
\end{equation}
In this case,~$\psi\ll1$ implies that~$1-\varepsilon^2 J^2A_J^2\ll1$, and in particular,~$A_J(r_{\rm out})\approx \frac{1}{\varepsilon J}$ or~$A_J(r_{\rm out})\approx A_J(r_{\rm turning})$, where~$A_J(r_{\rm turning})=\frac1{\varepsilon J}$, see~\eqref{eq:def_r_transition}.  

The emerging picture is that the case~$\psi\ll1$ corresponds to the case where the turning point is very close to the outer rim of the cylinder
\[
r_{\rm turning}\approx r_{\rm out}.
\]  
Note that the turning point,~$r_{\rm turning}$, is closely related to the effective inner cylinder radius from which the cylinder is superconductive, see Section~\ref{sec:SC_partially_destroyed}.  Therefore, as expected, when~$\psi\ll1$, superconductivity is destroyed in nearly all the cylinder region, except in some narrow region near~$r=r_{\rm out}$.  

Let us consider the function~$A_J$ in the region near~$r=r_{\rm out}$
\[
A_{\rm in}\left(\rho\right)=\varepsilon J A_J,\quad \rho=\frac{r_{\rm out}-r}{\lambda}.
\]
In terms of the scaled variables, and in the region~$r_{\rm turning}<r=r_{\rm out}$ in which~$\psi\approx \sqrt{1-\varepsilon^2 J^2A_J^2}$, Equation~\eqref{eq:A} takes the form
\begin{subequations}\label{eq:weakSC}
\begin{equation}\label{eq:A_weakSC}
A_{\rm in}^{\prime\prime}(\rho)-\frac{\lambda\,A_{\rm in}^\prime}{r_{\rm out}-\lambda \rho}-\frac{\lambda^2\,A_{\rm in}}{(r_{\rm out}-\lambda \rho)^2}=A_{\rm in}-A_{\rm in}^3,\quad -A^\prime_i(0)+\frac{\lambda A_{\rm in}(0)}{r_{\rm out}}=0.
\end{equation}
Equation~\eqref{eq:A_weakSC} is a second order equation, and therefore an additional boundary condition is required.
The assumption~$\psi\approx \sqrt{1-\varepsilon^2 J^2A_J^2}$ is valid only beyond the transition layer around~$r=r_{\rm turning}$, see~\eqref{eq:g_approximation}.  As will be shown, 
\[
\rho_{\rm turning}=\frac{r_{\rm out}-r_{\rm turning}}{\lambda}=O(1).
\]
Therefore, one cannot use classic matching condition at~$\rho\gg1$ to obtain a second boundary condition in~\eqref{eq:A_weakSC}.  
Similar to the derivation of~\eqref{eq:variationalApprox}, in the limit of a transition layer with zero width,~$A_J$ takes the form
\begin{equation}\label{eq:AJform_weakSC}
A_J=\frac{1}{r}-c\,r,\quad c=\frac{\varepsilon J - r_{\rm turning}}{\varepsilon J\,r_{\rm turning}^2},\qquad r\le r_{\rm turning},
\end{equation}
where~$c$ is set such that~$\varepsilon J A_J(r_{\rm turning})=1$.
Using definition~\eqref{eq:AJform_weakSC} and~\eqref{eq:def_r_transition} gives rise to the approximate boundary condition at~$\rho_{\rm turning}$,
\begin{equation}\label{eq:A_weakSC_BC}
A_{\rm in}(\rho_{\rm turning})=1,\quad A^\prime_{\rm in}(\rho_{\rm turning})=\lambda\frac{2\varepsilon J-r_{\rm turning}}{ r_{\rm turning}^2}.
\end{equation}
\end{subequations}
The resulting problem~\eqref{eq:weakSC} is a free boundary problem, in which the location of the boundary at~$\rho=\rho_{\rm turning}$ is apriori unknown.    Accordingly, it has three boundary conditions.  Let us seek for a solution of~\eqref{eq:weakSC} for~$\lambda\ll1$ in the form
\begin{equation}\label{eq:As_weak_expansion}
A_{\rm in}(\rho)=A_0(\rho)+\lambda A_1(\rho)+O(\lambda^2),
\end{equation}
Substituting~\eqref{eq:As_weak_expansion} in~\eqref{eq:weakSC} and equating the~$O(1)$ and~$O(\lambda)$ terms yields 
\begin{equation}\label{eq:As_weak_leadingOrder}
A_0^{\prime\prime}(\rho)-A_0=-A_0^3,\qquad A_0^\prime(0)=0,\quad A_0(\rho_{\rm turning})=1,\quad A_0^\prime(\rho_{\rm turning})=0
\end{equation}
and
\begin{equation}\label{eq:As_weak_firstOrder}
A_1^{\prime\prime}(\rho)-A_1=\frac{A_0^\prime}{r_{\rm out}}-3A_0^2A_1,\qquad A_1^\prime(0)=\frac{A_0(0)}{r_{\rm out}},\quad A_1(\rho_{\rm turning})=0,\quad A_1^\prime(\rho_{\rm turning})=\frac{2\varepsilon J- r_{\rm turning}}{r_{\rm turning}^2},
\end{equation}
respectively.

Equation~\eqref{eq:As_weak_leadingOrder} yields~$A_0\equiv1$.  Substituting this result in~\eqref{eq:As_weak_firstOrder} and using the boundary condition~$A_1(\rho_{\rm turning})=0$ yields
\[
A_1=c\,\sin(\sqrt{2}(\rho-\rho_{\rm turning})),\qquad A_1^\prime(0)=\frac{1}{r_{\rm out}},\quad A_1^\prime(\rho_{\rm turning})=\frac{2\varepsilon J- r_{\rm turning}}{r_{\rm turning}^2}.
\]
The solution form implies that~$\rho_{\rm turning}=O(1)$, otherwise~$A_1$ is oscillatory and~$A_J$ is not a monotone function, in contrast to Lemma~\ref{lem:AJ}.  Therefore,~$r_{\rm turning}=r_{\rm out}+O(\lambda)$.  This result enables using the boundary condition for~$O(\lambda)$ terms of~\eqref{eq:As_weak_expansion},
\begin{equation}\label{eq:A1_BC_turning_with_rout}
A_1^\prime(\rho_{\rm turning})=\frac{2\varepsilon J- r_{\rm out}}{r_{\rm out}^2},
\end{equation}
which implies that
\[
c=\frac{2\varepsilon J- r_{\rm out}}{\sqrt{2}r_{\rm out}^2}.
\]
The boundary condition at~$\rho=0$, see~\eqref{eq:As_weak_firstOrder}, implies that 
\begin{equation}\label{eq:rho_turning}
\cos(\sqrt2\rho_{\rm turning})=\frac{r_{\rm out}}{2\varepsilon J- r_{\rm out}}
\end{equation}
Overall, in the region~$r_{\rm turning}<r<r_{\rm out}$,~$A_J$ is approximated by
\begin{equation}\label{eq:Ai_result}
A_J^{\rm approx}=\frac{1}{\varepsilon J}\left[1-\lambda\frac{2\varepsilon J- r_{\rm out}}{\sqrt{2}r_{\rm out}^2} \sin\left(\sqrt{2}(\rho_{\rm turning}-\rho)\right)\right],\quad \rho=\frac{r_{\rm out}-r}{\lambda},\qquad r_{\rm turning}<r<r_{\rm out},
\end{equation}
where~$\rho_{\rm turning}$ satisfies~\eqref{eq:rho_turning}.  

Figure~\ref{fig:weakSC_error} presents the approximation error
\begin{equation}\label{eq:E_weakSC}
E=\max_{r_{\rm turning}<r<r_{\rm out}}|A_J-A_J^{\rm approx}|,
\end{equation}
where~$A_J$ is the numerical solution of system~\eqref{eq:system_ring_region}, and~$A_J^{\rm approx}$ is the corresponding approximation~\eqref{eq:Ai_result}. We observe that for~$J=6.5/\varepsilon$, the error decreases linearly,~$E\sim 0.15\lambda+8\cdot 10^{-4}$, and that the overall error decreases with increasing~$J$.  Similar to the derivation of~\eqref{eq:variationalApprox}, the error is~$O(\lambda)$ although expansion~\eqref{eq:As_weak_expansion} is up to~$O(\lambda^2)$, due to the error introduced by the approximate boundary conditions~\eqref{eq:A_weakSC_BC}.  We observe that the error is plotted in a region in which it attains a minimum.  A likely cause for this behavior is a high order error component of the form~$\varepsilon^\gamma/\lambda$ as in Figure~\ref{fig:errorAnalysis_HighCurrent}, although full resolution of high order terms is required to verify this claim.  
\begin{figure}[ht!]
\begin{center}
\scalebox{0.6}{\includegraphics{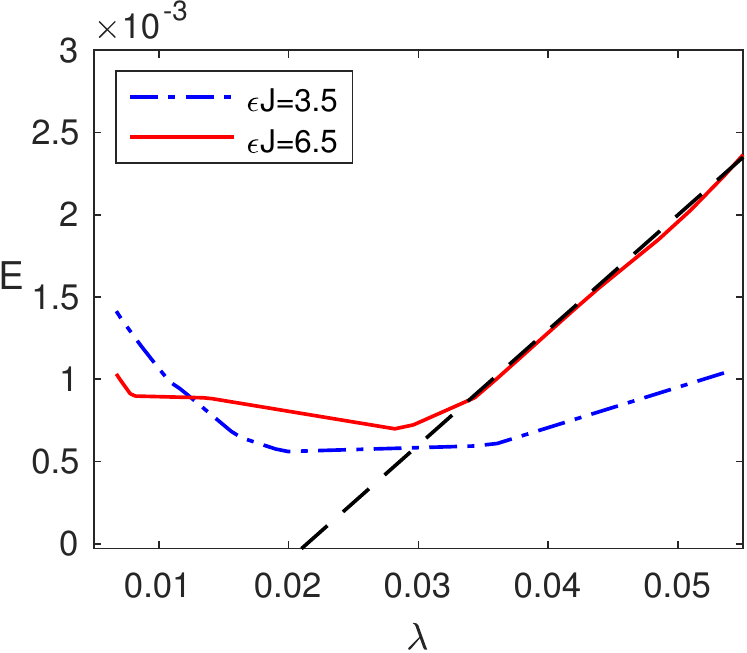}}
\caption{Error~$E$ given by~\eqref{eq:E_weakSC} where~$A_J$ is the numerical solution of system~\eqref{eq:system_ring_region} for~$r_{\rm in}=1.5$,~$r_{\rm out}=2$,~$\varepsilon=0.001$ and current values~$J=6.5/\varepsilon$ ({\color{red} solid}) and~$J=3.5/\varepsilon$ ({\color{blue} dash dots}). Superimposed is the fitted curve~$a+\lambda b$ where~$a\approx -1.5\cdot 10^{-3}$ and~$b\approx 0.07$ (dashes).}
\label{fig:weakSC_error}
\end{center}
\end{figure}

\begin{figure}[ht!]
\begin{center}
\scalebox{0.5}{\includegraphics{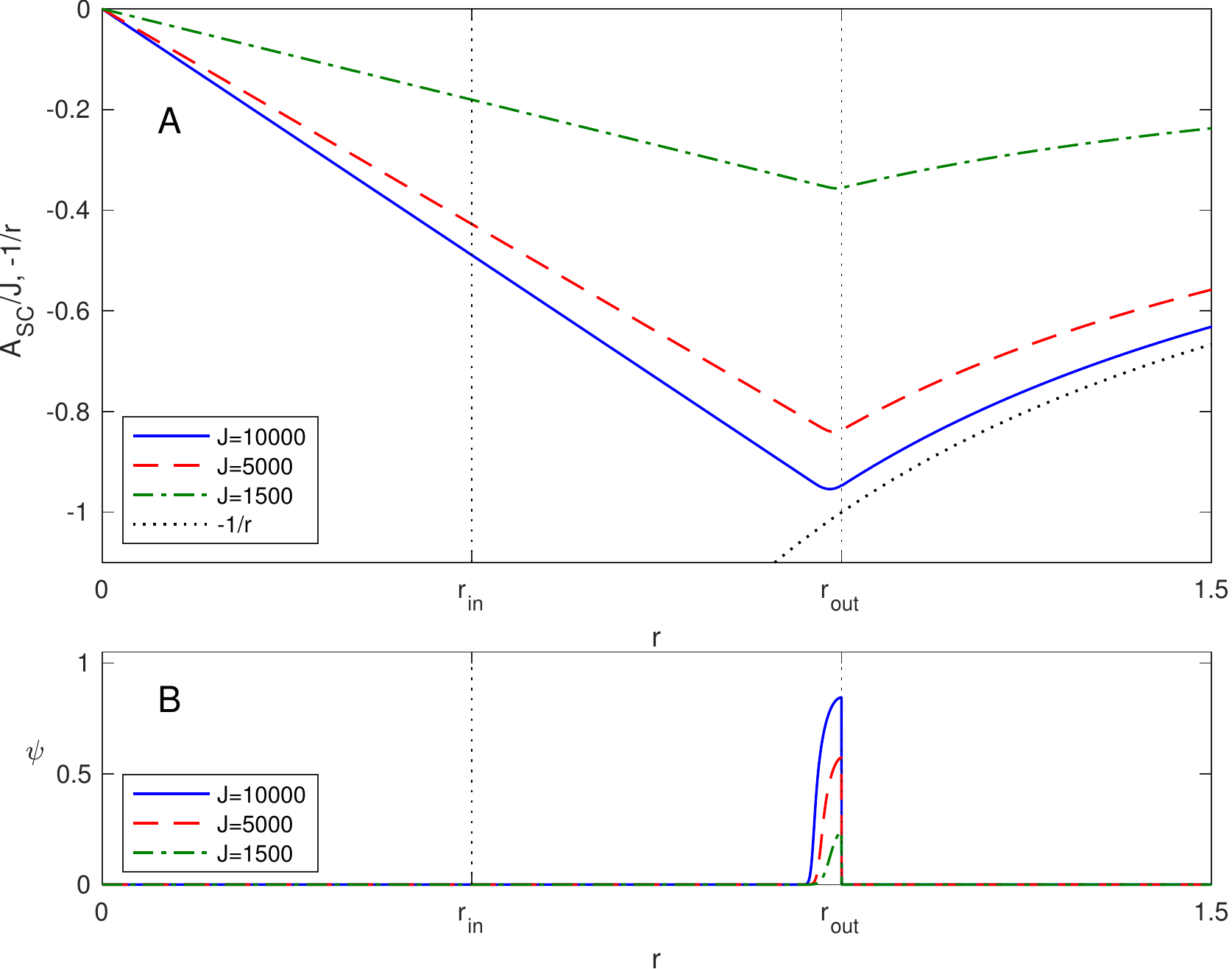}}
\caption{A: Graph of scaled vector potential~$A_{\rm sc}/J=A_J-1/r$, where~$A_J$ is the numerical solution of system~\eqref{eq:system_ring_region} for~$r_{\rm in}=0.5$,~$r_{\rm out}=1$,~$\varepsilon=0.001$, and~$\lambda=0.025$ for~$J=10000$ ({\color{blue} solid}),~$J=5000$ ({\color{red} dashes}), and~$J=1500$ ({\color{greenCurve} dash-dots}).  B: Graph of order parameter~$\psi$ corresponding to the three solutions of~\eqref{eq:system_ring_region} presented in A.}
\label{fig:generalPictureAJ_weakSC}
\end{center}
\end{figure}In contrast to the solution characterized in Sections~\ref{sec:superconductive} and~\ref{sec:highCurrentRegime} for which 
\[
A_J(r_{\rm out})\approx 0,
\]
the solutions approximated by~\eqref{eq:Ai_result} satisfy
\begin{equation}\label{eq:Ai_result_rout}
A_J^{\rm approx}(r_{\rm out})=\frac{-\lambda \sqrt2\sqrt{\varepsilon J}\sqrt{\varepsilon J - r_{\rm out}} + r_{\rm out}^2}{\varepsilon J r_{\rm out}^2}
\end{equation}
This implies that outside the cylinder~$r>r_{\rm out}$, the scaled vector potential~$A_{\rm sc}/J$, where~$A_{\rm sc}$ is defined by~\eqref{eq:rel_cylinder_domain_to_R}, does not equal to~$-1/r$, see Figure~\ref{fig:generalPictureAJ_weakSC} in comparison with Figures~\ref{fig:generalPictureAJ_lowCurrent} and~\ref{fig:generalPicture_partialSC}.

The result~\eqref{eq:Ai_result} is valid when~$\lambda A_1\ll A_0$ or~$\varepsilon\lambda J\ll r_{\rm out}^2$.  Furthermore, 
expression~\eqref{eq:Ai_result_rout} implies that~$\varepsilon J>r_{\rm out}$.  Overall,
\begin{equation}\label{eq:weakSC_currentregime}
\frac{r_{\rm out}}{\varepsilon }<J\ll \frac{r_{\rm out}^2}{\varepsilon\lambda}.
\end{equation}
This flux regime overlaps with the flux regimes studied in Sections~\ref{sec:superconductive} and~\ref{sec:highCurrentRegime}.  The implication is that at the overlap region the system~\eqref{eq:system_ring_region} has multiple non-trivial solutions.  Indeed, two different solutions of~\eqref{eq:system_ring_region} with the same parameters and, in particular,~$J=1500$ are presented in Figure~\ref{fig:generalPictureAJ_weakSC} and Figure~\ref{fig:generalPictureAJ_lowCurrent}.  Similarly, two different solutions of~\eqref{eq:system_ring_region} with the same parameters and, in particular,~$J=5000$ are presented in Figure~\ref{fig:generalPictureAJ_weakSC} and Figure~\ref{fig:generalPicture_partialSC}.
\begin{figure}[ht!]
\begin{center}
\scalebox{0.6}{\includegraphics{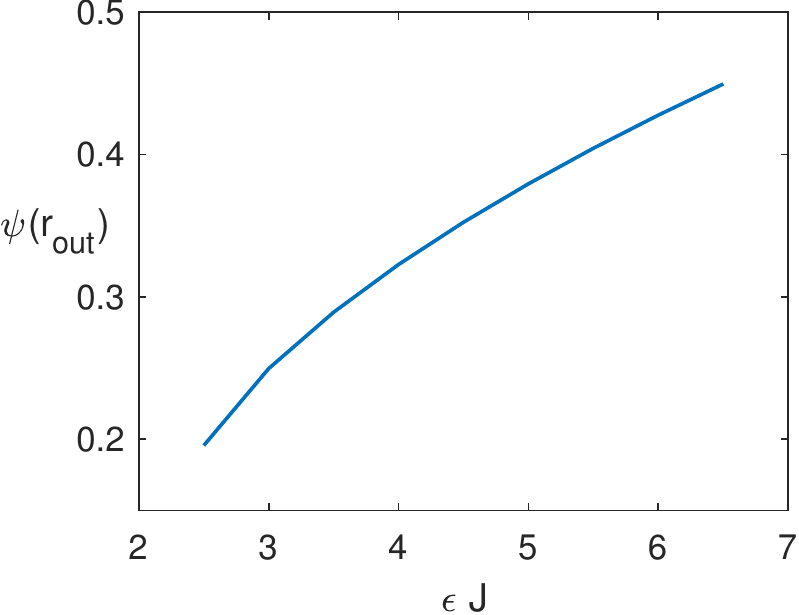}}
\caption{Value~$\psi(r_{\rm out})$ as function of~$J$ where~$\psi$ is the numerical solution of system~\eqref{eq:system_ring_region} for~$r_{\rm in}=1.5$,~$r_{\rm out}=2$,~$\varepsilon=0.001$ and~$\lambda=0.05$.}
\label{fig:weakSC_psiM}
\end{center}
\end{figure}

Furthermore, surprisingly, we observe in Figure~\ref{fig:generalPictureAJ_weakSC} that the superconducting state order parameter~$\psi$ increases with the current~$J$, see also Figure~\ref{fig:weakSC_psiM}.  Indeed, substituting~\eqref{eq:As_weak_expansion} in~\eqref{eq:psi_leadingOrder}, yields that
\[
\frac{d\,\psi(r_{\rm out})}{dJ}>0.
\]
Therefore, while the analysis in Sections~\ref{sec:superconductive} and~\ref{sec:highCurrentRegime} suggested that upon an increase of current, superconductivity will gradually decrease as until it is completely destroyed, the above results give rise to a very different picture.  Indeed, we find that superconductivity could be destroyed in a regime~\eqref{eq:weakSC_currentregime} of currents  which are significantly smaller than the currents in the `high flux regime'~\eqref{eq:J_highcurrent_regime} studied in Section~\ref{sec:highCurrentRegime}.  

\section{Numerical continuation study}\label{sec:numericalContinuation}
The analysis presented in Sections~\ref{sec:superconductive} and~\ref{sec:highCurrentRegime} shows that at low applied flux, the cylinder region is fully superconductive.  Then, as the flux~$J$ increases to magnitudes of~$O(1/\varepsilon\lambda)$, superconductivity is gradually reduced and eventually destroyed in increasing parts of the ring starting from the inner rim and moving outwards.   The analysis in Section~\ref{sec:weakSC} showed that surprisingly, 
the flux regime at which superconductivity diminishes is significantly smaller than the flux regimes studied in Section~\ref{sec:highCurrentRegime}.  This implies that for a range of flux values~$J$, the system~\eqref{eq:system_ring_region} has multiple non-trivial solutions.  It is not clear, however, from the asymptotic analysis, whether these solutions belong to different solution branches, bifurcate from a branch, or belong to one solution branch that undergoes branch folding.  In this section, we conduct a numerical continuation study in aim of mapping the solution space of system~\eqref{eq:system_ring_region}, and providing answers to the above questions.

\begin{figure}[ht!]
\begin{center}
\scalebox{0.5}{\includegraphics{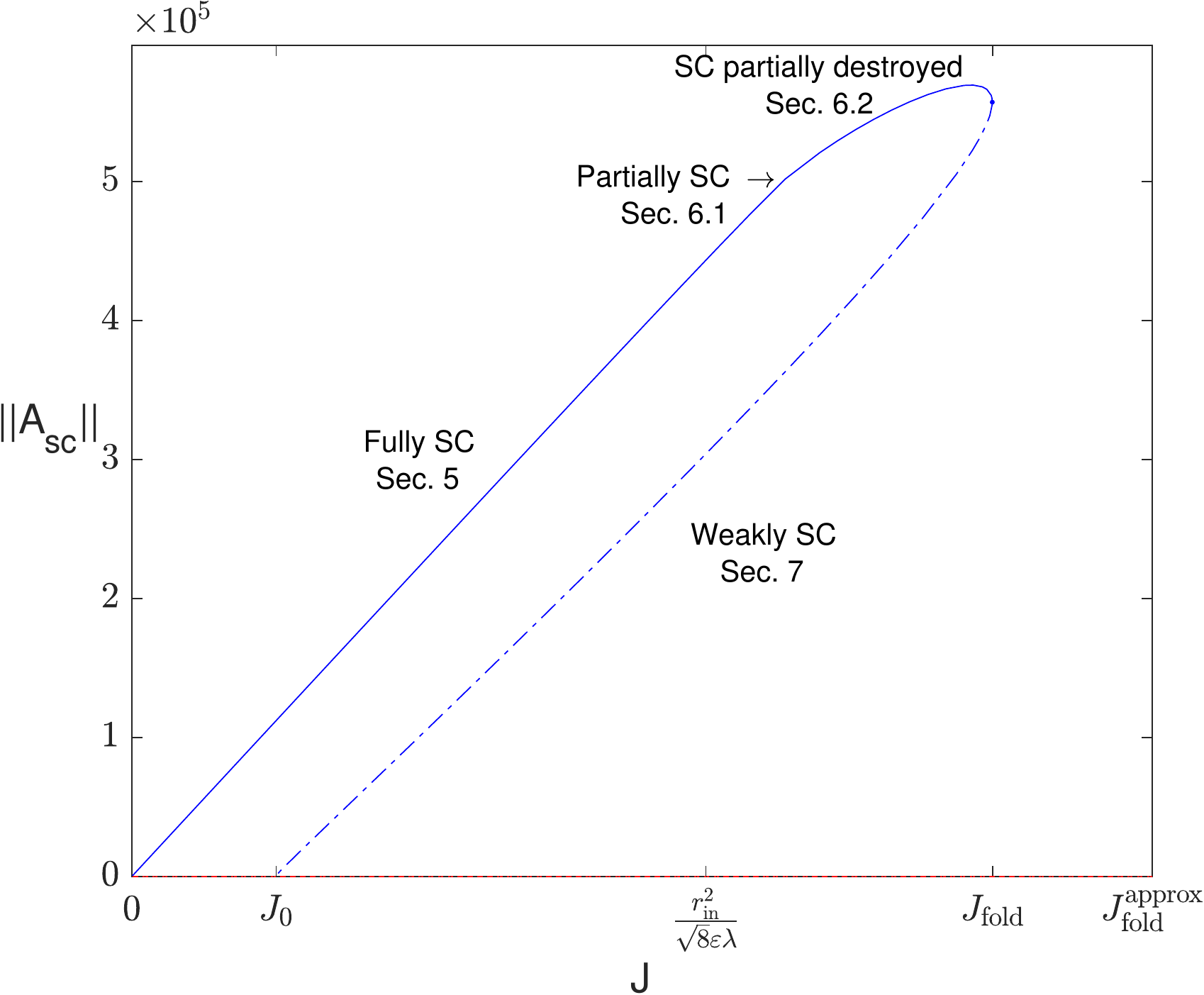}}
\caption{Bifurcation graph presenting the~$L_2$ norm~$\|A_{\rm sc}\|_2=\left[\int |A_{\rm sc}|^2d{\bf x}\right]^\frac12$ of~$A_{\rm sc}$ as a function of~$J$ where~$A_{\rm sc}$ is defined by~\eqref{eq:AJ} and~$A_J$ is the solution of~\eqref{eq:system_ring_region} with~$r_{\rm in}=1.5$,~$r_{\rm out}=2$,~$\varepsilon=0.001$ and~$\lambda=0.1$.  {\color{blue} Solid curve} is the branch of solutions characterized in Section~\ref{sec:superconductive} and~\ref{sec:highCurrentRegime} until the fold point~$J_{\rm fold}$.   {\color{blue} Dash-dotted curve} is the portion of the branch of solutions characterized in Section~\ref{sec:weakSC}.  {\color{red} Dotted curve} is the branch of trivial solutions~$A_{\rm sc}\equiv0$.  Note that the line style, solid or dash-dotted, does not provide an indication for the stability of the corresponding solutions.} 
\label{fig:bifurcationGraph_annotated}
\end{center}
\end{figure}
Figure~\ref{fig:bifurcationGraph_annotated} presents a bifurcation graph for system~\eqref{eq:system_ring_region} where~$J$ is the continuation parameter.   The trivial branch~$A_{\rm SC}\equiv0$ is the red dotted curve.  At~$J\ll1$, we observe that the system~\eqref{eq:system_ring_region} has a branch of non-trivial solutions marked by a solid blue curve.  These are the solution corresponding to a full superconductive state which were characterized in Section~\ref{sec:superconductive}, see, e.g., Figure~\ref{fig:generalPictureAJ_lowCurrent}.  When~$J>r_{\rm in}^2/\sqrt{8}\varepsilon\lambda$, the solutions along this branch, correspond to partial superconductive state, see, for example, Figure~\ref{fig:generalPicture_partialSC}.  In particular, the study of these solutions, see Section~\ref{sec:highCurrentRegime}, showed that, to leading order, they behave as solutions of system~\eqref{eq:system_ring_region} in the low flux case, but with an effective inner cylinder radius~$r_{\rm eff}$ which is larger than the inner cylinder radius~$r_{\rm in}$.   As discussed in Section~\ref{sec:highCurrentRegime}, the effective inner cylinder radius must reside in the cylinder region,~$r_{\rm in}\le r_{\rm eff}\le r_{\rm out}$.   This yields the bound
\[
J<\frac{r_{\rm out}^2}{\sqrt8\varepsilon\lambda},
\]
see~\eqref{eq:highCurrentRegime_destory_SC}, on the flux regime~$J$ in which such solution with partial superconductivity can exist.    
In terms of the bifurcation graph, we observe that the branch has a fold point at fluxes of the magnitude of this bound.  After the fold point, we observe a lower branch of solutions, marked by a blue dash-dotted curve.  As shown in Section~\ref{sec:weakSC}, these are the solutions corresponding to weak superconductivity, in which~$\psi(r_{\rm out})<1$, see, e.g., Figure~\ref{fig:generalPictureAJ_weakSC}.    This branch does not end at~$J=0$, but rather at~$J_0\approx r_{\rm out}/\varepsilon$, see~\eqref{eq:weakSC_currentregime}.

The location~$J_{\rm fold}$ of the fold point can be approximated by~\eqref{eq:highCurrentRegime_destory_SC}, i.e.,
\begin{equation}\label{eq:Jfold}
J_{\rm fold}\approx J_{\rm fold}^{\rm approx}=\frac{r_{\rm out}^2}{\sqrt8\varepsilon\lambda}.
\end{equation} 
We observe in Figure~\ref{fig:bifurcationGraph_annotated} that the approximation~\eqref{eq:Jfold} is not very accurate.  This is since Figure~\ref{fig:bifurcationGraph_annotated} corresponds to solutions of~\eqref{eq:system_ring_region} with a relatively large~$\lambda$.  
Figure~\ref{fig:Jfold} presents a graph of the fold point~$J_{\rm fold}$ computed numerically as a function of~$\lambda$ ({\color{blue} solid curve}), together with a graph of the approximated fold point~\eqref{eq:Jfold} ({\color{red} dashed curve}).   We observe that relative error reduces linearly with~$\lambda$.  Nevertheless, at~$\lambda=0.05$ the relative error is roughly 12\%.
\begin{figure}[ht!]
\begin{center}
\scalebox{0.66}{\includegraphics{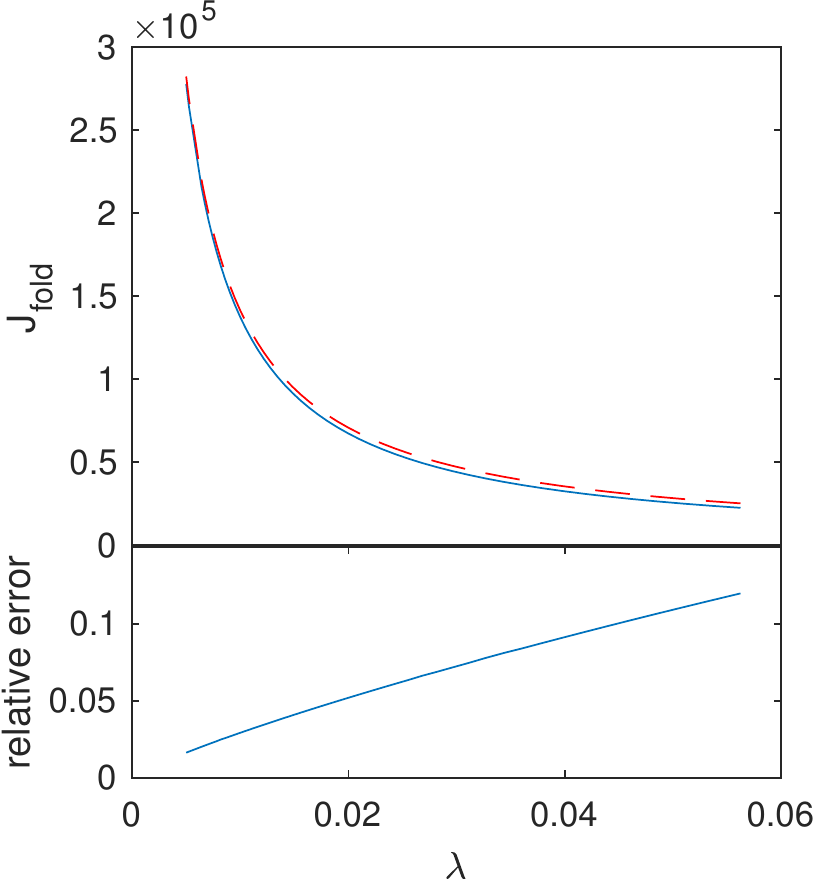}}
\caption{Top: Graph of the fold point~$J_{\rm fold}^{\rm numerical}$ as a function of~$\lambda$ ({\color{blue} solid}) as computed for~\eqref{eq:system_ring_region}
with~$r_{\rm in}=1.5$,~$r_{\rm out}=2$,~$\varepsilon=0.001$, and the approximated fold point~$J_{\rm fold}^{\rm approx}$ given by~\eqref{eq:Jfold} ({\color{red} dashes}).  Bottom graph is the relative error~$|J_{\rm fold}^{\rm numerical}-J_{\rm fold}^{\rm approx}|/J^{\rm numerical}_{\rm fold}$.}
\label{fig:Jfold}
\end{center}
\end{figure}

\begin{figure}[ht!]
\begin{center}
\scalebox{0.5}{\includegraphics{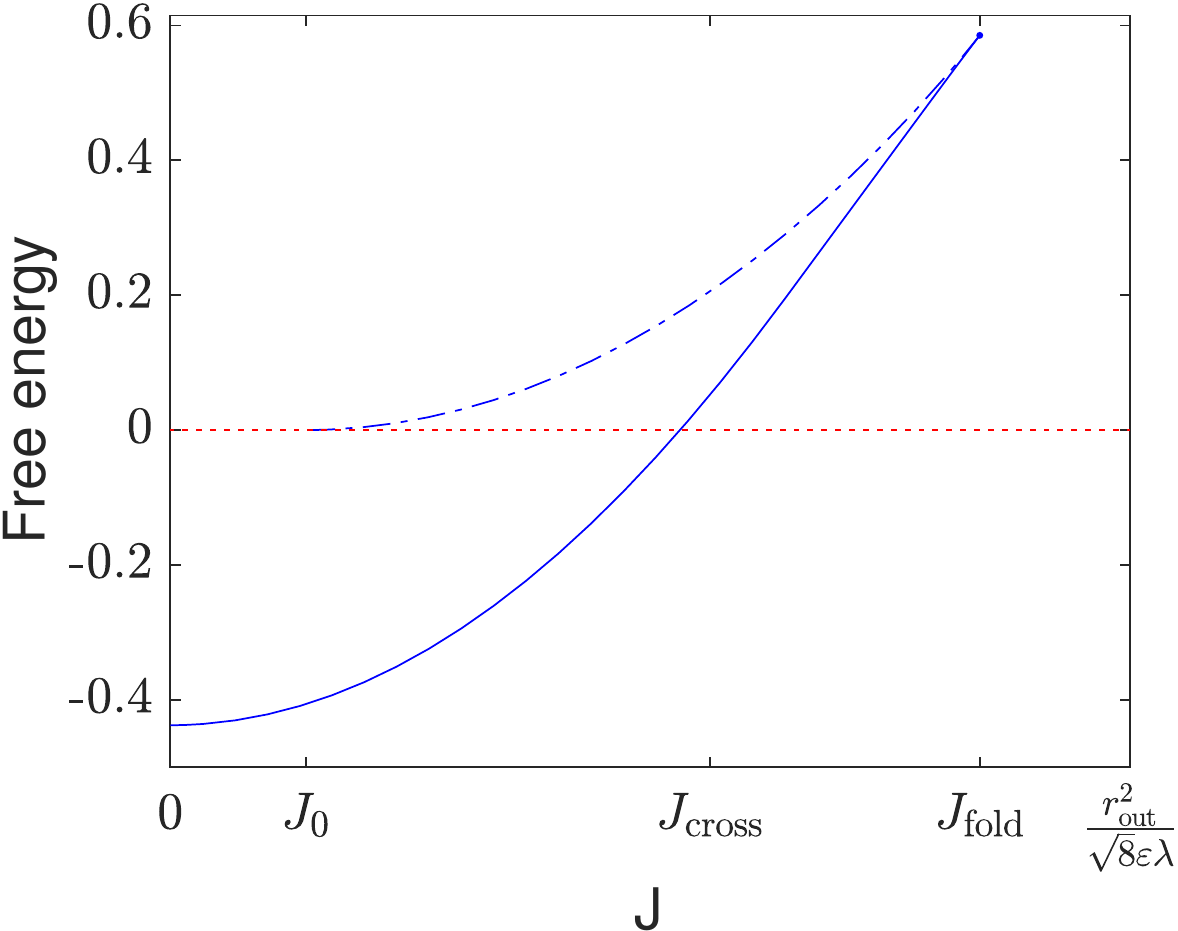}}
\caption{Bifurcation graph~$E(A_{\rm sc})$ as a function of~$J$ where~$\Energy$ is the free energy~\eqref{eq:energyAsc} and~$A_{\rm sc}$ is the 
 solution of~\eqref{eq:system_ring_region} with the same parameters as in Figure~\ref{fig:bifurcationGraph_annotated}:~$r_{\rm in}=1.5$,~$r_{\rm out}=2$,~$\varepsilon=0.001$ and~$\lambda=0.1$.  {\color{blue} Solid curve} is the branch of solutions characterized in Section~\ref{sec:superconductive} and~\ref{sec:highCurrentRegime} until the fold point~$J_{\rm fold}$ marked by the marker `{\color{blue} $\cdot$}'. {\color{blue} Dash-dotted curve} is the portion of the branch of solutions characterized in Section~\ref{sec:weakSC}.  {\color{red} Dotted curve} is the branch of trivial solutions~$A_{\rm sc}\equiv0$.  Note that the line style, solid or dash-dotted, does not relate to stability features of the corresponding solutions.} 
\label{fig:bifurcationGraph_energy}
\end{center}
\end{figure}The above analysis does not consider stability of the solutions, and accordingly the line style in Figure~\ref{fig:bifurcationGraph_annotated}, solid or dash-dotted, does not relate to stability features of the corresponding solutions.  Study of the related time-dependent system and the stability of the solutions of the steady-state system~\eqref{eq:system_ring_region} is beyond the scope of this work.  Often, however, physical systems are driven towards solution with lower free energy.  Figure~\ref{fig:bifurcationGraph_energy} presents the free energy~\eqref{eq:energyAsc} of the solutions of~\eqref{eq:system_ring_region} as a function of~$J$.  We observe that the branch of solutions corresponding to full or partial superconductivity, see solid curve, is energetically preferable over the branch of solutions corresponding to weak superconductivity, see dash-dotted curve.  We further observe that, for~$J\ll1$, the solutions corresponding to full superconductivity, see solid curve, have minimal free energy.   Yet for~$J>J_{\rm cross}$ where~$J_{\rm cross}=O(1/\varepsilon\lambda)$, the trivial solution (branch corresponding to {\color{red} dotted curve}) is energetically preferable.  

\begin{remark}
The analysis of Section~\ref{sec:superconductive} shows that for a narrow ring~$\sqrt{r_{\rm out}^2-r_{\rm in}^2}\ll\sqrt{r_{\rm out}^2+r_{\rm in}^2}$, the point at which~$\Energy(A_{\rm sc})=0$ resides in the fully conductive case satisfies
\[
J_{\rm cross}=\sqrt{r_{\rm out}^2-r_{\rm in}^2}\frac{r_{\rm in}}{\sqrt8\varepsilon\lambda}[1+O(\lambda)].
\]
\end{remark}

\section{Emerging picture from an experimentalist point of view}
The numerical continuation study together with the analysis conducted at different parameter regime give rise to a mapping of the solution space of system~\eqref{eq:system_ring_region}, as described in Section~\ref{sec:numericalContinuation}.  The motivation of this work is to study system~\eqref{eq:system_ring_region} to better understand the function of a stiffnessonometer device.  It is, now, instructive to revisit the emerging picture of the system from an experimentalist point of view.  

From an experimental point of view, there are two partially overlapping regimes of interest: (I) low flux $J \ll r^2_{\rm in}/\varepsilon\lambda$ (Sec.~\ref{sec:superconductive}), and (II) strong stiffness which usually arises at low temperatures where $\varepsilon \ll \lambda \ll 1$ (Sec.~\ref{sec:highCurrentRegime}, and \ref{sec:weakSC} ). In this work we have focused on regime II.  Nevertheless, our analysis is also valid in regime I.

In regime I, the vector potential at the pickup loop radius, $A_{\rm sc}(R_{\rm pl})$, in~\eqref{eq:Asc_J_dependence} could be used with~$A_{\rm coil}$ to extract $\lambda$. In regime II, we find three types of solutions, see Figure~\ref{fig:bifurcationGraph_annotated}:  The first type consists of solutions which are partially super-conductive (see Sections~\ref{sec:partialSC} and \ref{sec:SC_partially_destroyed}).  This brach of solution is continuously connected to $J=0$.  The second type consists of weakly super-conductive solutions, see Section~\ref{sec:weakSC}.  This branch of solution that does not connect directly to $J=0$, but rather through a folding point.  Finally, the third type is the trivial solution~$A_{\rm sc}\equiv0$.  

The partially super-conductive solutions are the most interesting type. As $J$ is ramped from zero, the solution is not different from regime I. But, for $J > {r_{\rm in}^2}/{\sqrt 8 \varepsilon \lambda }$, the order parameter's magnitude $\psi$ begins to diminish in the inner rim of the cylinder and the cylinder's hole in the regime of superconductivity effectively grows. Nevertheless, as long as $\psi=1$ over a region of length $\lambda$ somewhere in the SC, there is an outer region where the SC vector potential $A_{{\rm{sc}}}=-J/r$, which exactly cancels the applied vector potential $J/r$, giving a total $A=0$. Since $2\pi r A=\Phi $, it means that the flux generated by the SC exactly cancels the applied flux. The experiment is set to detect the SC flux, therefore, the signal will be linear in $J$ despite the destruction of SC in parts of the cylinder. With increasing $J$, the effective superconducting hole size increases until $\psi$ survives only on a boundary layer of width $\lambda$ at $r_{\rm out}$. This occurs at
\begin{equation}
J_{\rm fold} \lesssim \frac{{r_{\rm out}^2}}{{\sqrt 8 \varepsilon \lambda }}.
\end{equation}
The smaller $\varepsilon$ and $\lambda$ the better 
 the approximation is. At even larger $J$, the SC is no longer able to expel the applied flux and $A_{{\rm{sc}}}$ does no longer grow with $J$. A clear change of behavior in the $J$ dependence of $A_{{\rm{sc}}}$ is expected at $J_{\rm fold}$ allowing the determination of $\varepsilon$, given that $\lambda$ has been determined at lower flux values.

According to Figure~\ref{fig:bifurcationGraph_annotated}, the weakly super-conductive solutions have higher free energy than partially super-conductive solutions.  Since the entire exercise is based on finding the free energy minimum, such solutions are not expected to be observed experimentally. Moreover, since the branch of weakly super-conductive solution is not directly connected to $J=0$, there is no way to prepare such solutions even instantaneously. Applying current before cooling the SC is equivalent to setting the integer $m$ in Eq.~\eqref{eq:parameterslimits} such that $J-m$ is as close as possible to zero, which will send us back to the low $J$ solution.

As for the trivial solution, there is a crossing point at $J_{\rm cross}$ where the free energy of weakly super-conductive solutions is higher than that of the trivial solution. This might suggests that for $J>J_{\rm cross}$ the trivial solution is the relevant one. However, for this to occur, SC should disappear from a finite portion of the cylinder for an infinitesimal change in $J$, and the current should relax to zero. Nothing in the system can take this kinetic energy, and so it seems plausible that weakly superconductive solutions do not switch to the trivial solution upon increasing $J$ past $J_{\rm cross}$.

\section{Numerical details}\label{sec:numerics}
All simulations were conducted by pde2path~\cite{uecker2014pde2path} - a Matlab package for continuation and bifurcation in systems of PDEs.  
\section{Concluding remarks}\label{sec:ConcludingRemarks}
An interesting question is what happens when $J>J_{\rm fold}$. Our findings suggest that the system does not have a non-trivial solution for~$J>J_{\rm fold}$.  Hence, roughly speaking, in this case, the SC has ``no choice'' but to increase $m$ from zero so that $J-m \leq J_{\rm fold}$. In physical terms it means that a vortex is present at the center of the cylinder. This will lead to $A_{\rm{sc}}$ which is independent or decreases with increasing $J$. Vortex formation is beyond the scope of this work, and requires a study of the time dependent Ginzburg Landau equations.

Up to now, the Stiffnessometer has been used to collect data in regimes I and II.  This data was used to shed light on the nature of the phase transition in cuprates, and to show that upon cooling, SC first develops in two dimensions and only at lower temperature turn into a three dimensional phenomena~\cite{kapon2019phase}.  However,  due to the lack of theoretical understanding of the folding point, $\varepsilon$ has not been extracted, and analysis was restricted to parameter regimes far from the folding point. We anticipate that the derivation presented here will allow accurate determination of $\varepsilon$ in regimes not accessible before, and hopefully to new insights into the mechanisms of high temperature superconductivity.  Furthermore, the intuition obtained by this work could be tested experimentally by magnetic scanning techniques, and the concept that at high applied flux, the SC current is pushed to the outer radius should be examined.

The analysis we presented used the explicit standard form
$V=\alpha \psi^2+\beta \psi^4$ of the Ginzburg-Landau potential.  The question of what happens when higher order approximations of the Ginzburg-Landau potential are taken into account is open.
It is common wisdom that the main properties of superconductors are
controlled by the behavior of $V(\psi)$ near its extremal points $\psi=0,1$.
This is due to the fact that outside a small boundary layer, the value of $\psi$
is always close to one of these extrema.
One therefore expects that modifying $V(\psi)$, say by adding a $\psi^6$ term,
would only change the profile of $\psi(r)$ within the boundary layer without significantly
affecting the outer solution.

From a mathematical point of view, the study of the underlying system gives rise to demanding nonlinear turning point problems, in which the location of the turning point is a-priori unknown and the choice of the matching direction is non-trivial.  In this work, we avoided in a sense these problems using variational approximations.  Asymptotic analysis of the underlying nonlinear turning point problem is left for future study.  

Finally, in this work the super-conducting region we have considered was a hollow cylinder of infinite height. This choice allowed us to obtain the ODE system~\eqref{eq:system_ring_region} for the quantities of study.  In an actual system, the super-conducting region is a ring of finite height of magnitude comparable to its radius or much less.   In this case, the quantities of study are described by a PDE system.  Study of stiffnessonometer systems with super-conducting rings of finite (or zero) height will be presented elsewhere.
\subsection*{Acknowledgments}
This study was financially supported by Israeli Science Foundation (ISF) grant number 315/17.

\appendix


\begin{thebibliography}{10}
\expandafter\ifx\csname url\endcsname\relax
  \def\url#1{\texttt{#1}}\fi
\expandafter\ifx\csname urlprefix\endcsname\relax\def\urlprefix{URL }\fi
\expandafter\ifx\csname href\endcsname\relax
  \def\href#1#2{#2} \def\path#1{#1}\fi

\bibitem{parkinson2017compact}
B.~Parkinson, K.~Bouloukakis, R.~Slade, A compact 3 {T} all {HTS} cryogen-free
  {MRI} system, Superconductor Science and Technology 30~(12) (2017) 125009.

\bibitem{song2020design}
W.~Song, Z.~Jiang, M.~Staines, R.~A. Badcock, S.~C. Wimbush, J.~Fang, J.~Zhang,
  Design of a single-phase 6.5 {MVA}/25 k{V} superconducting traction
  transformer for the {C}hinese {F}uxing high-speed train, International
  Journal of Electrical Power \& Energy Systems 119 (2020) 105956.

\bibitem{yadav2014review}
S.~Yadav, G.~Choudhary, R.~K. Mandal, Review on fault current limiters, Int. J.
  Eng 3~(4).

\bibitem{kapon2017stiffnessometer}
I.~Kapon, K.~Golubkov, N.~Gavish, A.~Keren, Stiffnessometer, a
  magnetic-field-free superconducting stiffness meter and its application
  (2017).
\newblock \href {http://arxiv.org/abs/1705.00624} {\path{arXiv:1705.00624}}.

\bibitem{kapon2019phase}
I.~Kapon, Z.~Salman, I.~Mangel, T.~Prokscha, N.~Gavish, A.~Keren, Phase
  transition in the cuprates from a magnetic-field-free stiffness meter
  viewpoint, Nature communications 10~(1) (2019) 2463.

\bibitem{kapon2018nature}
I.~Kapon, Z.~Salman, T.~Prokscha, N.~Gavish, A.~Keren, The nature of the phase
  transition in the cuprates as revealed by a magnetic field free stiffness
  meter, arXiv preprint arXiv:1808.04820.

\bibitem{hastings1980boundary}
S.~P. Hastings, J.~B. Mcleod, A boundary value problem associated with the
  second {P}ainlev{\'e} transcendent and the {K}orteweg-de {V}ries equation,
  Archive for Rational Mechanics and Analysis 73~(1) (1980) 31--51.

\bibitem{karasev2003global}
M.~Karasev, A.~Pereskokov, Global asymptotics and quantization rules for
  nonlinear differential equations, Translations of the American Mathematical
  Society-Series 2 208 (2003) 165--234.

\bibitem{bender2013advanced}
C.~M. Bender, S.~A. Orszag, Advanced mathematical methods for scientists and
  engineers I: {A}symptotic methods and perturbation theory, Springer Science
  \& Business Media, 2013.

\bibitem{tinkham2004introduction}
M.~Tinkham, Introduction to superconductivity, Courier Corporation, 2004.

\bibitem{schrieffer2018theory}
J.~R. Schrieffer, Theory of superconductivity, CRC Press, 2018.

\bibitem{uecker2014pde2path}
H.~Uecker, D.~Wetzel, J.~D. Rademacher, pde2path-{A} matlab package for
  continuation and bifurcation in 2{D} elliptic systems, Numerical Mathematics:
  Theory, Methods and Applications 7~(1) (2014) 58--106.

\end{thebibliography}
\end{document}